\theoremstyle{definition}
\newtheorem{definition}{Definition}
\newtheorem{remark}{Remark}
\theoremstyle{plain}
\newtheorem{proposition}{Proposition}
\newcommand{\aci}[1]{\acfi{#1}\acused{#1}}
\def\thm@space@setup{%
  \thm@preskip=1mm \thm@postskip=1mm
}
\newcommand{\footref}[1]{$^{\ref{#1}}$}
\newcommand{\ie}{\textit{i.e.}}
\newcommand{\eg}{\textit{e.g.}}
\newcommand{\etal}{\textit{et al.}\xspace}
\newcommand{\etc}{\textit{etc.}\xspace}
\newcommand{\present}{{\sc present}\xspace}
\newcommand{\paioli}{{\sf paioli}\xspace}
\title{
  {\bf Formally Proved Security of Assembly Code Against Power Analysis}\\
  {\smaller A Case Study on Balanced Logic}
}
\date{}
\author{
  Pablo Rauzy
  \and
  Sylvain Guilley
  \and
  Zakaria Najm\\
  Institut Mines-Télécom ; Télécom ParisTech ; CNRS LTCI\\
  {\it firstname}{\tt .}{\it lastname}{\tt @telecom-paristech.fr}
}
\begin{document}
\maketitle

\begin{abstract}
In his keynote speech at CHES 2004, Kocher advocated that side-channel attacks were an illustration that formal cryptography was not as secure as it was believed because some assumptions (\eg, no auxiliary information is available during the computation) were not modeled.
This failure is caused by formal methods' focus on models rather than implementations.
In this paper we present formal methods and tools for designing protected code and proving its security against power analysis.
These formal methods avoid the discrepancy between the model and the implementation by working on the latter rather than on a high-level model.
Indeed, our methods allow us
\begin{inparaenum}[(a)]
\item to automatically insert a power balancing countermeasure directly at the assembly level, and to prove the correctness of the induced code transformation; and
\item to prove that the obtained code is balanced with regard to a reasonable leakage model.
\end{inparaenum}
We also show how to characterize the hardware to use the resources which maximize the relevancy of the model.
The tools implementing our methods are then demonstrated in a case study on an $8$-bit AVR smartcard for which we generate a provably protected \present implementation that reveals to be at least $250$ times more resistant to CPA attacks.
\end{abstract}

\noindent \textbf{Keywords.}
Dual-rail with Precharge Logic (DPL), formal proof, static analysis, symbolic execution, implementation, DPA, CPA, smartcard, PRESENT, block cipher, Hamming distance, OCaml.

\section{Introduction}
\label{formaldpl:intro}

The need to trust code is a clear and proved fact, but the code itself needs to be proved before it can be trusted.
In applications such as cryptography or real-time systems, formal methods are used to prove functional properties on the critical parts of the code.
Specifically in cryptography, some non-functional properties are also important, but are not typically certified by formal proofs yet.
One example of such a property is the resistance to side-channel attacks.
Side-channel attacks are a real world threat to cryptosystems;
they exploit auxiliary information gathered from implementations through physical channels such as power consumption, electromagnetic radiations, or time, in order to extract sensitive information (\eg, secret keys).
The amount of leaked information depends on the implementation and as such appears difficult to model.
As a matter of fact, physical leakages are usually not modeled when it comes to prove the security properties of a cryptographic algorithm.
By applying formal methods directly on implementations we can avoid the discrepancy between the model and the implementation.
Formally proving non-functional security properties then becomes a matter of modeling the leakage itself.
In this chapter we make a first step towards formally trustable cryptosystems, including for non-functional properties, by showing that modeling leakage and applying formal methods to implementations is feasible.

Many existing countermeasures against side-channel attacks are implemented at the hardware level, especially for smartcards.
However, software level countermeasures are also very important,
not only in embedded systems where the hardware cannot always be modified or updated,
but also in the purely software world.
For example, Zhang \etal~\cite{DBLP:conf/ccs/ZhangJRR12} recently extracted private keys using side-channel attacks against a target virtual machine running on the same physical server as their virtual machine.
Side channels in software can also be found each time there are some non-logic behaviors (in the sense that it does not appear in the equations / control-flow modeling the program) such as timing or power consumption (refer to~\cite{kocher-dpa_and_related_attacks}), but also some software-specific information such as packet size for instance (refer to~\cite{DBLP:journals/jce/MatherO12}).

In many cases where the cryptographic code is executed on secure elements (smartcards, TPM, tokens, \etc) side-channel and fault analyses are the most natural attack paths.
A combination of signal processing and statistical techniques on the data obtained by side-channel analysis allows to build key hypotheses distinguishers.
The protection against those attacks is necessary to ensure that secrets do not leak, and most secure elements are thus evaluated against those attacks.
Usual certifications are the common criteria (ISO/IEC 15408), the FIPS 140-2 (ISO/IEC 19790), or proprietary schemes (EMVCo, CAST, \etc).

\paragraph{Power analysis.}
It is a form of side-channel attack in which the attacker measures the power consumption of a cryptographic device.
\aci{SPA} consists in directly interpreting the electrical activity of the cryptosystem.
On unprotected implementations it can for instance reveal the path taken by the code at branches even when timing attacks~\cite{kocher-timing_attacks} cannot.
\aci{DPA}~\cite{kocher-dpa_and_related_attacks} is more advanced: the attacker can compute the intermediate values within cryptographic computations by statistically analyzing data collected from multiple cryptographic operations.
It is powerful in the sense that it does not require a precise model of the leakage, and thus works blind, \ie, even if the implementation is blackbox.
As suggested in the original \ac{DPA} paper by Kocher \etal~\cite{kocher-dpa_and_related_attacks}, power consumption is often modeled by Hamming weight of values or Hamming distance of values' updates as those are very correlated with actual measures.
Also, when the leakage is little noisy and the implementation is software,
\aci{ASCA}~\cite{DBLP:conf/cisc/RenauldS09} are possible;
they consist in modelling the leakage by a set of Boolean equations, where the key bits are the only unknown variables~\cite{DBLP:journals/jce/CarletFGR12}.
\smallskip

Thwarting side-channel analysis is a complicated task, since an unprotected implementation leaks at every step.
Simple and powerful attacks manage to exploit any bias.
In practice, there are two ways to protect cryptosystems: ``palliative'' versus ``curative'' countermeasures.
Palliative countermeasures attempt to make the attack more difficult, however without a theoretical foundation.
They include variable clock, operations shuffling, and dummy encryptions among others (see also~\cite{DBLP:conf/ches/GuneysuM11}).
The lack of theoretical foundation make these countermeasures hard to formalize and thus not suitable for a safe certification process.
Curative countermeasures aim at providing a leak-free implementation based on a security rationale.
The two defense strategies are
\begin{inparaenum}[(a)]
\item make the leakage as decorrelated from the manipulated data as possible (\emph{masking}~\cite[Chp.~9]{dpa_book}), or
\item make the leakage constant, irrespective of the manipulated data (hiding or \emph{balancing}~\cite[Chp.~7]{dpa_book}).
\end{inparaenum}

\paragraph{Masking.}
Masking mixes the computation with random numbers, to make the leakage (at least in average) independent of the sensitive data.
Advantages of masking are (\emph{a priori})
the independence with respect to the leakage behavior of the hardware,
and the existence of provably secure masking schemes~\cite{DBLP:conf/ches/RivainP10}.
There are two main drawbacks to masking.
First of all, there is the possibility of high-order attacks (that examine the variance or the joint leakage);
when the noise is low, \acp{ASCA} can be carried out on one single trace~\cite{DBLP:conf/ches/RenauldSV09},
despite the presence of the masks, that are just seen as more unknown variables, in addition to the key.
Second, masking demands a greedy requirement for randomness (that is very costly to generate).
Another concern with masking is the overhead it incurs in the computation time.
For instance, a provable masking of \acs{AES}-128 is reported in~\cite{DBLP:conf/ches/RivainP10} to be $43$ (resp. $90$) times slower than the non-masked implementation with a $1$st (resp. $2$nd) order masking scheme.
Further, recent studies have shown that masking cannot be analyzed independently from the execution platform:
for example \emph{glitches} are transient leakages that are likely to depend on more than one sensitive data, hence being high-order~\cite{mangard-ches06}.
Indeed, a glitch occurs when there is a race between two signals, \ie, when it involves more than one sensitive variable.
Additionally, the implementation must be carefully scrutinized to check for the absence of \emph{demasking} caused by overwriting a masked sensitive variable with its mask. 

\paragraph{Balancing.}
Balancing requires a close collaboration between the hardware and the software: two indistinguishable resources, from a side-channel point of view, shall exist and be used according to a dual-rail protocol.
\aci{DPL} consists in precharging both resources, so that they are in a common state, and then setting one of the resources.
Which resource has been set is unknown to the attacker, because both leak in indistinguishable ways (by hypothesis).
This property is used by the \ac{DPL} protocol to ensure that computations can be carried out without exploitable leakage~\cite{DBLP:journals/tcad/TiriV06}.

\paragraph{Contributions.}
\acf{DPL} is a simple protocol that may look easy to implement correctly;
however,
in the current context of awareness about cyber-threats, it becomes evident that (independent) formal tools that are able to \emph{generate} and \emph{verify} a ``trusted'' implementation have a strong value.
\begin{compactitem}
\item We describe a design method for developing balanced assembly code by making it obey the \ac{DPL} protocol.
This method consists in automatically inserting the countermeasure and formally proving that the induced code transformation is correct (\ie, semantic preserving).
\item We present a formal method (using symbolic execution) to statically prove the absence of power consumption leakage in assembly code provided that the hardware it runs on satisfies a finite and limited set of requirements corresponding to our leakage model.
\item We show how to characterize the hardware to run the \ac{DPL} protocol on resources which maximize the relevancy of the leakage model.
\item We provide a tool called \paioli\footnote{\label{formaldpl:fn:paioli}\url{http://pablo.rauzy.name/sensi/paioli.html}} which implements the automatic insertion of the \ac{DPL} countermeasure in assembly code, and, independently, is able to statically prove the power balancing of a given assembly code.
\item Finally, we demonstrate our methods and tool in a case study on a software implementation of the \present~\cite{DBLP:conf/ches/BogdanovKLPPRSV07} cipher running on an $8$-bit AVR micro-controller.
Our practical results are very encouraging: the provably balanced \ac{DPL} protected implementation is \emph{at least} $250$ times more resistant to power analysis attacks than the unprotected version while being only $3$ times slower.
The \ac{SNR} of the leakage is divided by approximately $16$.
\end{compactitem}

\paragraph{Related work.}
The use of formal methods is not widespread in the domain of implementations security.
In cases where they exist, security proofs are usually done on mathematical models rather than implementations.
An emblematic example is the Common Criteria~\cite{cc}, that bases its ``formal'' assurance evaluation levels on ``Security Policy Model(s)'' (class SPM) and not on implementation-level proofs.
This means that it is the role of the implementers to ensure that their implementations fit the model, which is usually done by hand and is thus error-prone.
For instance, some masking implementations have been proved;
automatic tools for the insertion of masked code have even been prototyped~\cite{DBLP:conf/ches/MossOPT12}.
However, masking relies a lot on randomness, which is a rare resource and is hard to formally capture.
Thus, many aspects of the security are actually displaced in the randomness requirement rather that soundly proved.
Moreover, in the field of masking, most proofs are still literate (\ie, verified manually, not by a computer program).
This has led to a recent security breach in what was supposed to be a proved~\cite{DBLP:conf/ches/RivainP10} masking implementation~\cite{DBLP:conf/fse/CarletGPQR12}.
Previous similar examples exist, \eg, the purported high-order masking scheme~\cite{DBLP:conf/ctrsa/SchrammP06},
defeated one year after in~\cite{DBLP:conf/ches/CoronPR07}.

Timing and cache attacks are an exception as they benefit from the work of Köpf \etal~\cite{DBLP:conf/ccs/KopfB07,DBLP:conf/csfw/KopfD09}.
Their tool, CacheAudit~\cite{DBLP:journals/iacr/DoychevFKMR13}, implements formal methods that directly work on x$86$ binaries.

Since we started our work on \ac{DPL}, others have worked on similar approaches.
Independently, it has been shown that \ac{SNR} reduction is possible with other encodings that are less costly, such as ``dual-nibble'' (Chen \etal~\cite{CC:CARDIS14}) or ``m-out-of-n'' (Servant \etal~\cite{VS:CARDIS14}).
However, it becomes admittedly much more difficult to balance the
resources aimed at hiding one each other.
Thus, there is a trade-off between performance (in terms of execution speed and code size) and security.
In this chapter we propose a proof-of-concept of maximal security.

In this light it is easy to conclude that the use of formal methods to prove the security of implementations against power analysis is a need, and a technological enabler: it would guarantee that the instantiations of security principles are as strong as the security principles themselves.

\paragraph{Organization of the chapter.}
The \ac{DPL} countermeasure is studied in Sec.~\ref{formaldpl:dpl}.
Sec.~\ref{formaldpl:paioli-gen} details our method to balance assembly code and prove that the proposed transformation is correct.
Sec.~\ref{formaldpl:paioli-proof} explains the formal methods used to compute a proof of the absence of power consumption leakage.
Sec.~\ref{formaldpl:avr} is a practical case study using the \present algorithm on an AVR micro-controller.
Conclusions and perspectives are drawn in Sec.~\ref{formaldpl:concl}.

\section{\acl{DPL}}
\label{formaldpl:dpl}

Balancing (or hiding) countermeasures have been employed against side channels since early 2004, with dual-rail with precharge logic.
The \ac{DPL} countermeasure consists in computing on a redundant representation: each bit $y$ is implemented as a pair $(y_\text{False}, y_\text{True})$.
The bit pair is then used in a protocol made up of two phases:
\begin{compactenum}
\item a \emph{precharge} phase, during which all the bit pairs are zeroized $(y_\text{False}, y_\text{True})=(0,0)$, such that the computation starts from a known reference state;
\item an \emph{evaluation} phase, during which the $(y_\text{False}, y_\text{True})$ pair is equal to $(1,0)$ if it carries the logical value $0$, or $(0,1)$ if it carries the logical value $1$.
\end{compactenum}
The value $(y_\text{False}, y_\text{True})=(1,1)$ is unused.
As suggested in~\cite{micromicro2003}, it can serve as a \emph{canary} to detect a fault.
Besides, if a fault turns a $(1,0)$ or $(0,1)$ value into either $(0,0)$ or $(1,1)$, then the previous functional value has been forgiven.
It is a type of infection, already mentioned in~\cite{DBLP:conf/eurocrypt/IshaiPSW06,NS:FDTC09}.
Unlike other infective countermeasure, \ac{DPL} is not scary~\cite{DBLP:conf/fdtc/BattistelloG13}, in that it consists in an erasure.
Indeed, the mutual information between the erased and the initial data is zero
(provided only one bit out of a dual pair is modified).

\subsection{State of the Art}
\label{formaldpl:dpl-sota}

\begin{figure}
  \centering
  \includegraphics[width=0.7\textwidth]{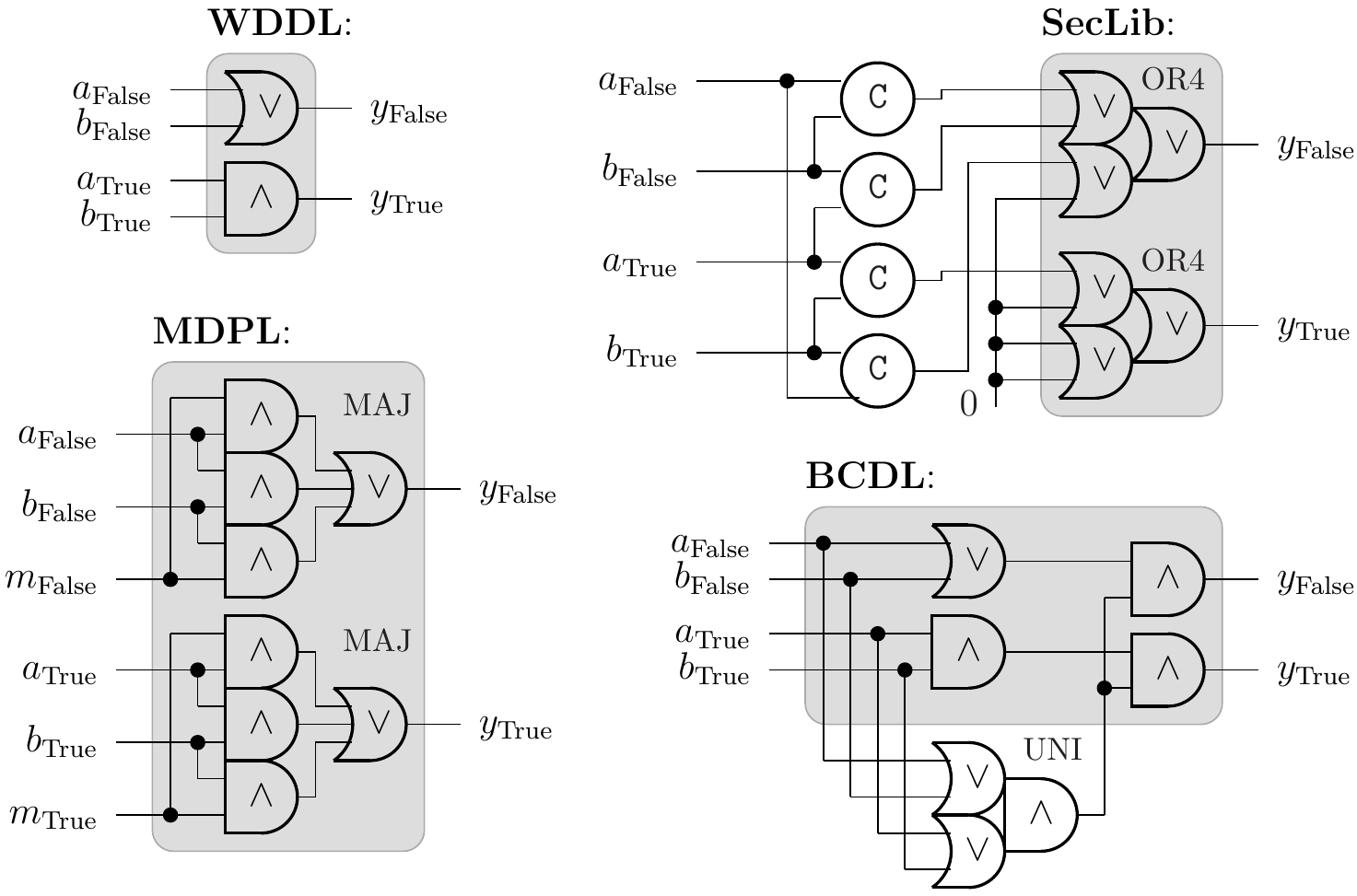}
  \caption{\label{formaldpl:fig-dpl_logic_styles} Four dual-rail with precharge logic styles.}
\end{figure}

Various \ac{DPL} styles for electronic circuits have been proposed.
Some of them, implementing the same logical {\it and} functionality, are represented in Fig.~\ref{formaldpl:fig-dpl_logic_styles}; many more variants exist, but these four are enough to illustrate our point.
The reason for the multiplicity of styles is that the indistinguishability hypothesis on the two resources holding $y_\text{False}$ and $y_\text{True}$ values happens to be violated for various reasons, which leads to the development of dedicated hardware.
A first asymmetry comes from the gates driving $y_\text{False}$ and $y_\text{True}$.
In \aci{WDDL}~\cite{tiri-date04}, these two gates are different: logical {\it or} versus logical {\it and}.
Other logic styles are balanced with this respect.
Then, the load of the gate shall also be similar.
This can be achieved by careful place-and-route constraints~\cite{tiri-cardis04,guilley-ches05}, that take care of having lines of the same length, and that furthermore do not interfere one with the other (phenomenon called ``crosstalk'').
As those are complex to implement exactly for all secure gates, the \aci{MDPL}~\cite{LNCS36590172} style has been proposed: instead of balancing exactly the lines carrying $y_\text{False}$ and $y_\text{True}$, those are randomly swapped, according to a random bit, represented as a pair $(m_\text{False}, m_\text{True})$ to avoid it from leaking.
Therefore, in this case, not only the computing gates are the same ({\it viz.} a majority), but the routing is balanced thanks to the extra mask.
However, it appeared that another asymmetry could be fatal to \ac{WDDL} and \ac{MDPL}: the gates pair could evaluate at different dates, depending on their input.
It is important to mention that side-channel acquisitions are very accurate in timing (off-the-shelf oscilloscopes can sample at more than $1$\,Gsample/s, \ie, at a higher rate than the clock period), but very inaccurate in space (\ie, it is difficult to capture the leakage of an area smaller than about $1$\,mm$^2$ without also recording the leakage from the surrounding logic).
Therefore, two bits can hardly be measured separately.
To avoid this issue, every gate has to include some synchronization logic.
In Fig.~\ref{formaldpl:fig-dpl_logic_styles}, the ``computation part'' of the gates is represented in a grey box.
The rest is synchronization logic.
In SecLib~\cite{SG:TC08}, the synchronization can be achieved by Muller C-elements (represented with a symbol {\tt C}~\cite{shams-cmos_cmuller}), and act as a decoding of the inputs configuration.
Another implementation, \aci{BCDL}~\cite{MN:DATE10}, parallelize the synchronization with the
computation.

\subsection{\acs{DPL} in Software}
\label{formaldpl:dpl-sw}

In this chapter, we want to run \ac{DPL} on an off-the-shelf processor.
Therefore, we must:
\begin{inparaenum}[(a)]
\item identify two similar resources that can hold true and false values in an indiscernible way for a side-channel attacker;
\item play the \ac{DPL} protocol by ourselves, in software.
\end{inparaenum}
We will deal with the former in Sec.~\ref{formaldpl:paioli-hw}. The rest of this section deals with the latter.

The difficulty of balancing the gates in hardware implementations is simplified in software.
Indeed in software there are less resources than the thousands of gates that can be found in hardware (aimed at computing fast, with parallelism).
Also, there is no such problem as early evaluation, since the processor executes one instruction after the other; therefore there are no unbalanced paths in timing.
However, as noted by Hoogvorst \etal~\cite{hoogvorstDD11}, standard micro-processors cannot be used {\it as is} for our purpose:
instructions may clobber the destination operand without precharge;
arithmetic and logic instructions generate numbers of $1$ and $0$ which depend on the data.

\begin{wrapfigure}{R}{0.25\textwidth} 
  \centering
  \begin{tabular}{r c l}
  $r_1$ & $\leftarrow$ & $r_0$ \\
  $r_1$ & $\leftarrow$ & $a$ \\
  $r_1$ & $\leftarrow$ & $r_1 \wedge 3$ \\
  $r_1$ & $\leftarrow$ & $r_1 \ll 1$ \\
  $r_1$ & $\leftarrow$ & $r_1 \ll 1$ \\
  $r_2$ & $\leftarrow$ & $r_0$ \\
  $r_2$ & $\leftarrow$ & $b$ \\
  $r_2$ & $\leftarrow$ & $r_2 \wedge 3$ \\
  $r_1$ & $\leftarrow$ & $r_1 \vee r_2$ \\
  $r_3$ & $\leftarrow$ & $r_0$ \\
  $r_3$ & $\leftarrow$ & $op[r_1]$ \\
  $d$   & $\leftarrow$ & $r_0$ \\
  $d$   & $\leftarrow$ & $r_3$ \\
  \end{tabular}
  \caption{\label{formaldpl:fig-dpl_macro} \protect\centering \ac{DPL} macro for $d = a~\mathrm{op}~b$.}
\end{wrapfigure}

To reproduce the \ac{DPL} protocol in software requires
\begin{inparaenum}[(a)]
\item \label{formaldpl:dplsw-req-bit} to work at the bit level, and
\item \label{formaldpl:dplsw-req-dup} to duplicate (in positive and negative logic) the bit values.
\end{inparaenum}
Every algorithm can be transformed so that all the manipulated values are bits (by the theorem of equivalence of universal Turing machines), so ({\it \ref{formaldpl:dplsw-req-bit}}) is not a problem.
Regarding ({\it \ref{formaldpl:dplsw-req-dup}}), the idea is to use two bits in each register / memory cell to represent the logical value it holds.
For instance using the two least significant bits, the logical value $1$ could be encoded as $1$ ({\tt 01}) and the logical value $0$ as $2$ ({\tt 10}).
Then, any function on those bit values can be computed by a look-up table indexed by the concatenation of its operands.
Each sensitive instruction can be replaced by a \emph{\ac{DPL} macro} which does the necessary precharge and fetch the result from the corresponding look-up table.

Fig.~\ref{formaldpl:fig-dpl_macro} shows a \ac{DPL} macro for the computation of $d = a~\mathrm{op}~b$, using the two least significant bits for the \ac{DPL} encoding.
The register $r_0$ is an always-zero register, $a$ and $b$ hold one \ac{DPL} encoded bit, and $op$ is the address in memory of the look-up table for the op operation.

This \ac{DPL} macro assumes that before it starts the state of the program is a valid \ac{DPL} state (\ie, that $a$ and $b$ are of the form {\tt /.+(01|10)/}\footnote{As a convenience, we use regular expressions notation.}) and leaves it in a valid \ac{DPL} state to make the macros chainable.

The precharge instructions (like $r_1 \leftarrow r_0$) erase the content of their destination register or memory cell before use.
If the erased datum is sensitive it is \ac{DPL} encoded, thus the number of bit flips (\ie, the Hamming distance of the update) is independent of the sensitive value.
If the erased value is not sensitive (for example the round counter of a block cipher) then the number of bit flips is irrelevant.
In both cases the power consumption provides no sensitive information.

The activity of the shift instructions (like $r_1 \leftarrow r_1 \ll
1$) is twice the number of \ac{DPL} encoded bits in $r_1$ (and thus does not
depend on the value when it is \ac{DPL} encoded).
The two most significant bits are shifted out and must be $0$, \ie, they cannot encode a \ac{DPL} bit.
The logical {\it or} instruction (as in $r_1 \leftarrow r_1 \vee r_2$) has a constant activity of one bit flip due to the alignment of its operands.
The logical {\it and} instructions (like $r_1 \leftarrow r_1 \wedge 3$) flips as many bits as there are $1$s after the two least significant bits (it's normally all zeros).

Accesses from/to the RAM (as in $r_3 \leftarrow op[r_1]$) cause as many bit flips as there are $1$s in the transferred data, which is constant when \ac{DPL} encoded.
Of course, the position of the look-up table in the memory is also important.
In order not to leak information during the addition of the offset ($op + r_1$ in our example), $op$ must be a multiple of $16$ so that its four least significant bits are $0$ and the addition only flips the number of bits at $1$ in $r_1$, which is constant since at this moment $r_1$ contains the concatenation of two \ac{DPL} encoded bit values.

We could use other bits to store the \ac{DPL} encoded value, for example the least and the third least significant bits.
In this case $a$ and $b$ have to be of the form {\tt /.+(0.1|1.0)/}, only one shift instruction would have been necessary, and the {\tt and} instructions' mask would be $5$ instead on $3$.

\section{Generation of \acs{DPL} Protected Assembly Code}
\label{formaldpl:paioli-gen} 

Here we present a generic method to protect assembly code against power analysis.
To achieve that we implemented a tool (See App.~\ref{sensi:paioli}) which transforms assembly code to make it compliant with the \ac{DPL} protocol described in Sec.~\ref{formaldpl:dpl-sw}.
To be as universal as possible the tool works with a generic assembly language presented in Sec.~\ref{formaldpl:paioli-asm}.
The details of the code transformation are given in Sec.~\ref{formaldpl:paioli-transform}.
Finally, a proof of the correctness of this transformation is presented in Sec.~\ref{formaldpl:paioli-transformation-proof}.

We implemented \paioli\footref{formaldpl:fn:paioli} using the OCaml\footnote{\url{http://ocaml.org/}} programming language,
which type safety helps to prevent many bugs.
On our \present case-study, it runs in negligible time ($\ll 1$~second), both for \ac{DPL} transformation and simulation, including balance verification.
The unprotected (resp. \ac{DPL}) bitslice AVR assembly file consists of $641$ (resp. $1456$) lines of code.
We use nibble-wise jumps in each \present operation, and an external loop over all rounds.

\subsection{Generic Assembly Language}
\label{formaldpl:paioli-asm}

Our assembly language is generic in that it uses a restricted set of instructions that can be mapped to and from virtually any actual assembly language.
It has the classical features of assembly languages:
logical and arithmetical instructions, branching, labels, direct and indirect addressing.
Fig.~\ref{formaldpl:fig-generic_asm} gives the \ac{BNF} of the language while Fig.~\ref{formaldpl:fig-asm_macro} gives the equivalent code of Fig.~\ref{formaldpl:fig-dpl_macro} as an example of its usage.

\begin{figure}
  \begin{alltt}
  Prog    ::= ( Label? Inst? ( ';' <comment> )? '\string\n' )*
  Label   ::= <label-name> ':'
  Inst    ::= Opcode0
            | Branch1 Addr
            | Opcode2 Lval Val
            | Opcode3 Lval Val Val
            | Branch3 Val Val Addr
  Opcode0 ::= 'nop'
  Branch1 ::= 'jmp'
  Opcode2 ::= 'not' | 'mov'
  Opcode3 ::= 'and' | 'orr' | 'xor' | 'lsl' | 'lsr'
            | 'add' | 'mul'
  Branch3 ::= 'beq' | 'bne'
  Val     ::= Lval | '#' <immediate-value>
  Lval    ::= 'r' <register-number>
            | '@' <memory-address>
            | '!' Val ( ',' <offset> )?
  Addr    ::= '#' <absolute-code-address>
            | <label-name>
\end{alltt}
  \caption{\label{formaldpl:fig-generic_asm} Generic assembly syntax (\ac{BNF}).}
\end{figure}

The semantics of the instructions are intuitive.
For {\tt Opcode2} and {\tt Opcode3} the first operand is the destination and the other are the arguments.
The {\tt mov} instruction is used to copy registers, load a value from memory, or store a value to memory depending on the form of its arguments.
We remark that the instructions use the ``{\tt instr dest op1 op2}'' format, which allows to map similar instructions from $32$-bit processors directly, as well as instructions from $8$-bit processors which only have two operands, by using the same register for {\tt dest} and {\tt op1} for instance.

\subsection{Code Transformation}
\label{formaldpl:paioli-transform}

\paragraph{Bitsliced code.}
As seen in Sec.~\ref{formaldpl:dpl}, \ac{DPL} works at the bit level. Transforming code to make it \ac{DPL} compliant thus requires this level of granularity.
Bitslicing is possible on any algorithm\footnote{Intuitively, the proof invokes the Universal Turing Machines equivalence (those that work with only $\{0,1\}$ as alphabet are as powerful as the others).}, but we found that bitslicing an algorithm is hard to do automatically.
In practice, every bitslice implementations we found were hand-crafted.
However, since Biham presented his bitslice paper~\cite{DBLP:conf/fse/Biham97a}, many block ciphers have been implemented in bitslice for performance reasons, which mitigate this concern.
So, for the sake of simplicity, we assume that the input code is already bitsliced.

\paragraph{\ac{DPL} macros expansion.}
This is the main point of the transformation of the code.

\begin{definition}[Sensitive value]
\label{formaldpl:def-senible_val}
A \emph{value} is said \emph{sensitive} if it depends on sensitive data.
A sensitive data depends on the secret key or the plaintext%
\footnote{Other works consider that a sensitive data must depend on both the secret key and the plaintext (as it is usually admitted in the ``\emph{only computation leaks}'' paradigm; see for instance~\cite[\S 4.1]{DBLP:conf/ches/RivainP10}).
Our definition is broader, in particular it also encompasses the random probing model~\cite{DBLP:conf/crypto/IshaiSW03}.}.
\end{definition}

\begin{definition}[Sensitive instruction]
\label{formaldpl:def-senible_instr}
We say that an \emph{instruction} is \emph{sensitive} if it may modify the Hamming weight of a sensitive value.
\end{definition}

\begin{wrapfigure}{r}{0.25\textwidth}
\begin{alltt}
 mov r1 r0
 mov r1 \textit{a}
 and r1 r1 #3
 lsl r1 r1 #1
 lsl r1 r1 #1
 mov r2 r0
 mov r2 \textit{b}
 and r2 r2 #3
 orr r1 r1 r2
 mov r3 r0
 mov r3 !r1,\textit{op}
 mov \textit{d} r0
 mov \textit{d} r3
\end{alltt}
\caption{\label{formaldpl:fig-asm_macro} \protect\centering \ac{DPL} macro of Fig.~\ref{formaldpl:fig-dpl_macro} in assembly.}
\vspace*{-5mm}
\end{wrapfigure}

All the sensitive instructions must be expanded to a \ac{DPL} macro.
Thus, all the sensitive data must be transformed too.
Each literal (``immediate'' values in assembly terms), memory cells that contain initialized constant data (look-up tables, etc.), and registers values need to be \ac{DPL} encoded.
For instance, using the two least significant bits, the $1$s stay $1$s ({\tt 01}) and the $0$s become $2$s ({\tt 10}).

Since the implementation is bitsliced, only the logical (bit level) operators are used in sensitive instructions ({\tt and}, {\tt or}, {\tt xor}, {\tt lsl}, {\tt lsr}, and {\tt not}).
To respect the \ac{DPL} protocol, {\tt not} instructions are replaced by {\tt xor} which inverse the positive logic and the negative logic bits of \ac{DPL} encoded values.
For instance if using the two least significant bits for the \ac{DPL} encoding, {\tt not $a$ $b$} is replaced by {\tt xor $a$ $b$ \#3}.
Bitsliced code never needs to use shift instructions since all bits are directly accessible.

Moreover, we currently run this code transformation only on block ciphers.
Given that the code is supposed to be bitsliced, this means that the branching and arithmetic instructions are either not used or are used only in a deterministic way (\eg, looping on the round counter) that does not depend on sensitive information.

Thus, only {\tt and}, {\tt or}, and {\tt xor} instructions need to be expanded to \ac{DPL} macros such as the one shown in Fig.~\ref{formaldpl:fig-asm_macro}.
This macro has the advantage that it actually uses two operands instructions only (when there are three operands in our generic assembly language, the destination is the same as one of the two others), which makes its instructions mappable one-to-one even with $8$-bit assembly languages.

\paragraph{Look-up tables.}
As they appear in the \ac{DPL} macro, the addresses of look-up tables are sensitive too.
As seen in Sec.~\ref{formaldpl:dpl-sw}, the look-up tables must be located at an address which is a multiple of $16$ so that the last four bits are
available when adding the offset (in the case where we use the last four bits to place the two \ac{DPL} encoded operands).
Tab.~\ref{formaldpl:tab-dpl_lut} present the $16$ values present in the look-up tables for {\tt and}, {\tt or}, and {\tt xor}.

\begin{table}
  \centering
  \caption{\label{formaldpl:tab-dpl_lut} Look-up tables for {\tt and}, {\tt or}, and {\tt xor}.}
  \begin{scriptsize}
    \begin{tabular}{rl}
      \toprule
      idx & {\tt 0000}, {\tt 0001}, {\tt 0010}, {\tt 0011}, {\tt 0100}, {\tt 0101}, {\tt 0110}, {\tt 0111}, {\tt 1000}, {\tt 1001}, {\tt 1010}, {\tt 1011}, {\tt 1100}, {\tt 1101}, {\tt 1110}, {\tt 1111} \\
      \midrule
      {\tt and} & {\tt ~00~}, {\tt ~00~}, {\tt ~00~}, {\tt ~00~}, {\tt ~00~}, {\tt ~01~}, {\tt ~10~}, {\tt ~00~}, {\tt ~00~}, {\tt ~10~}, {\tt ~10~}, {\tt ~00~}, {\tt ~00~}, {\tt ~00~}, {\tt ~00~}, {\tt ~00~} \\
      {\tt or}  & {\tt ~00~}, {\tt ~00~}, {\tt ~00~}, {\tt ~00~}, {\tt ~00~}, {\tt ~01~}, {\tt ~01~}, {\tt ~00~}, {\tt ~00~}, {\tt ~01~}, {\tt ~10~}, {\tt ~00~}, {\tt ~00~}, {\tt ~00~}, {\tt ~00~}, {\tt ~00~} \\
      {\tt xor} & {\tt ~00~}, {\tt ~00~}, {\tt ~00~}, {\tt ~00~}, {\tt ~00~}, {\tt ~10~}, {\tt ~01~}, {\tt ~00~}, {\tt ~00~}, {\tt ~01~}, {\tt ~10~}, {\tt ~00~}, {\tt ~00~}, {\tt ~00~}, {\tt ~00~}, {\tt ~00~} \\
      \bottomrule
    \end{tabular}
  \end{scriptsize}
\end{table}

Values in the look-up tables which are not at \ac{DPL} valid addresses, \ie,
addresses which are not a concatenation of {\tt 01} or {\tt 10} with {\tt 01}
or {\tt 10}, are preferentially \ac{DPL} invalid, \ie, {\tt 00} or {\tt 11}.
Like this if an error occurs during the execution (such as a fault injection for instance) it poisons the result and all the subsequent computations will be faulted too (infective computation).

\subsection{Correctness Proof of the Transformation}
\label{formaldpl:paioli-transformation-proof}

Formally proving the correctness of the transformation requires to define what we intend by ``correct''.
Intuitively, it means that the transformed code does the ``same thing'' as the original one.

\begin{definition}[Correct \ac{DPL} transformation]
\label{formaldpl:def-correctness}
Let $S$ be a valid state of the system (values in registers and memory).
Let $c$ be a sequence of instructions of the system.
Let $\widehat{S}$ be the state of the system after the execution of $c$ with state $S$, we denote that by $S \xrightarrow{c} \widehat{S}$.
We write $dpl(S)$ for the \ac{DPL} state (with \ac{DPL} encoded values of the $1$s and $0$s in memory and registers) equivalent to the state $S$.\\
We say that $c'$ is a \emph{correct \ac{DPL} transformation} of $c$ if $S \xrightarrow{c} \widehat{S} \implies dpl(S) \xrightarrow{c'} dpl(\widehat{S})$.
\end{definition}

\begin{proposition}[Correctness of our code transformation]
\label{formaldpl:prop-correctness}
The expansion of the sensitive instructions into \ac{DPL} macros such as presented in Sec.~\ref{formaldpl:dpl-sw} is a correct \ac{DPL} transformation.
\end{proposition}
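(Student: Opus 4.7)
The plan is to proceed by structural induction on the instruction sequence $c$, with the stronger inductive invariant that for every prefix $c_0$ of $c$ such that $S \xrightarrow{c_0} S_0$, we have $dpl(S) \xrightarrow{c_0'} dpl(S_0)$ \emph{and} $dpl(S_0)$ is a valid \ac{DPL} state (\ie, every register/memory cell holding a sensitive bit contains the pattern \texttt{01} or \texttt{10} on the two designated positions, with the other bits that may be touched by the macros set to $0$). Carrying the invariant is essential because the \ac{DPL} macros, as noted in Sec.~\ref{formaldpl:dpl-sw}, rely on this precondition to make the Hamming-distance arguments go through and to keep the sensitive part of the state in canonical form for the next macro.

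The base case ($c = \varepsilon$) is immediate. For the inductive step, write $c = c_0 \cdot i$ and apply the IH to $c_0$, so it suffices to verify the one-step claim: if $i$ is (resp. is not) sensitive and $S_0 \xrightarrow{i} \widehat{S}$, then executing the transformation of $i$ on $dpl(S_0)$ lands in $dpl(\widehat{S})$ and preserves \ac{DPL} validity. I would split on the syntactic form of $i$:
\textbf{(a)} $i$ is non-sensitive (loop counters, branches driven by the round index, etc.): by Def.~\ref{formaldpl:def-senible_instr} the instruction does not touch any sensitive cell, so it is left untextranslated by the transformation and its semantics on $dpl(S_0)$ coincides with its semantics on $S_0$ on the non-sensitive fragment, while leaving the sensitive fragment (and thus \ac{DPL} validity) untouched.
\textbf{(b)} $i$ is \texttt{not $a$ $b$}: the replacement \texttt{xor $a$ $b$ \#3} flips exactly the two low bits, which by construction of $dpl$ swaps the roles of the positive and negative rail, producing $dpl$ of the negated value.
\textbf{(c)} $i$ is \texttt{and}/\texttt{orr}/\texttt{xor} between two \ac{DPL}-encoded operands: replace it by the macro of Fig.~\ref{formaldpl:fig-asm_macro} and trace the macro line by line on the hypothesised \ac{DPL}-valid $dpl(S_0)$. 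After the two precharges and loads, $r_1$ and $r_2$ hold $dpl(a)$ and $dpl(b)$ respectively; the mask \texttt{\#3} isolates the two significant bits (since by the invariant the other bits are already $0$, these \texttt{and} steps are semantically identities but are needed to keep the proof local, not relying on global reasoning); the two \texttt{lsl} move $dpl(a)$ into positions $2$--$3$; the \texttt{orr} concatenates them into a $4$-bit index $x \in \{5,6,9,10\}^{\text{(nibble)}}$; finally \texttt{mov $r_3$ !$r_1$,$op$} reads entry $x$ of the table, which by the construction of Tab.~\ref{formaldpl:tab-dpl_lut} is precisely $dpl(a \,\mathrm{op}\, b)$, and the last two instructions precharge and write this into $d$.

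The key auxiliary facts I would state and use explicitly are (i) alignment of the table base $op$ on a multiple of $16$, so that the address computation $op + r_1$ is a bitwise \texttt{or} and Tab.~\ref{formaldpl:tab-dpl_lut} really is indexed by the four low bits of $r_1$; and (ii) the table entries at non-\ac{DPL} indices are never read when the pre-state is \ac{DPL}-valid, so the infective \texttt{00}/\texttt{11} choices there do not interfere with functional correctness. Composition of the per-instruction claim with the IH, together with the fact that each macro leaves the state \ac{DPL}-valid, closes the induction and yields Prop.~\ref{formaldpl:prop-correctness}.

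The main obstacle is not mathematical depth but book-keeping: the verification of case (c) is essentially a symbolic execution of a 13-instruction macro, and one must be careful that the invariant is strong enough to carry through the \texttt{and \#3}, \texttt{lsl}, \texttt{orr} sequence without leaving stray bits in the high part of the register — which is precisely why the invariant above insists that bits outside the two \ac{DPL} positions are zero, rather than merely that the \ac{DPL} positions carry \texttt{01} or \texttt{10}.
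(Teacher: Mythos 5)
Your proposal is correct and follows essentially the same route as the paper: reduce to per-instruction correctness by composition (your structural induction with an invariant is the paper's transitivity argument), then verify each macro by exhaustively tracing its thirteen instructions over the four possible bitsliced input pairs, exactly as the paper does in its execution table for \texttt{and} (with \texttt{or} and \texttt{xor} declared analogous). The only difference is one of explicitness: you spell out the \ac{DPL}-validity invariant, the table-alignment facts, and the non-sensitive and \texttt{not} cases, all of which the paper leaves implicit in its definition of $dpl(\cdot)$ and its remark that the macros are chainable.
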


\begin{proof}
Let $a$ and $b$ be instructions.
Let $c$ be the code $a; b$ (instruction $a$ followed by instruction $b$).
Let $X$, $Y$, and $Z$ be states of the program.
If we have $X \xrightarrow{a} Y$ and $Y \xrightarrow{b} Z$,
then we know that $X \xrightarrow{c} Z$ (by \emph{transitivity}).

Let $a'$ and $b'$ be the \ac{DPL} macro expansions of instructions $a$ and $b$.
Let $c'$ be the \ac{DPL} transformation of code $c$.
Since the expansion into macros is done separately for each sensitive instruction,
without any other dependencies, we know that $c'$ is $a'; b'$.\\
If we have $dpl(X) \xrightarrow{a'} dpl(Y)$ and $dpl(Y) \xrightarrow{b'} dpl(Z)$,
then we know that $dpl(X) \xrightarrow{c'} dpl(Z)$.

This means that a chain of correct transformations is a correct transformation.
Thus, we only have to show that the \ac{DPL} macro expansion is a correct transformation.

Let us start with the {\tt and} operation.
Since the code is bitsliced, there are only four possibilities.
Tab.~\ref{formaldpl:proof-and_table} shows these possibilities for the {\tt and d a b} instruction.

Tab.~\ref{formaldpl:proof-and_macro_steps} shows the evolution of the values of $a$, $b$, and $d$ during the execution of the macro which {\tt and d a b} expands to.
We assume the look-up table for {\tt and} is located at address $and$.
Tab.~\ref{formaldpl:proof-and_macro_resume} sums up the Tab.~\ref{formaldpl:proof-and_macro_steps} in the same format as Tab.~\ref{formaldpl:proof-and_table}.

\begin{table}
  \begin{minipage}{0.1\textwidth}
    ~ \\[14mm]
    \begin{scriptsize}
      \begin{tabular}{r}
        Before \\[0.5mm]
        After
      \end{tabular}
    \end{scriptsize}
  \end{minipage}
  \begin{minipage}{0.35\textwidth}
    \caption{\label{formaldpl:proof-and_table} {\tt and\,d\,a\,b}.}
    \begin{scriptsize}
      \begin{tabular*}{0.98\textwidth}{@{\extracolsep{\fill}}cccc}
        \toprule
        \multicolumn{4}{c}{$a, b, d$} \\
        \midrule
        {\tt 0}, {\tt 0}, {\tt ?} & {\tt 0}, {\tt 1}, {\tt ?} & {\tt 1}, {\tt 0}, {\tt ?} & {\tt 1}, {\tt 1}, {\tt ?} \\
        {\tt 0}, {\tt 0}, {\tt 0} & {\tt 0}, {\tt 1}, {\tt 0} & {\tt 1}, {\tt 0}, {\tt 0} & {\tt 1}, {\tt 1}, {\tt 1} \\
        \bottomrule
      \end{tabular*}
    \end{scriptsize}
  \end{minipage}
  \begin{minipage}{0.02\textwidth}~\end{minipage}
  \begin{minipage}{0.48\textwidth}
    \caption{\label{formaldpl:proof-and_macro_resume} \ac{DPL} {\tt and\,d\,a\,b}.}
    \begin{scriptsize}
      \begin{tabular*}{0.98\textwidth}{@{\extracolsep{\fill}}cccc}
        \toprule
        \multicolumn{4}{c}{$a, b, d$} \\
        \midrule
        {\tt 10}, {\tt 10}, {\tt ~?} & {\tt 10}, {\tt 01}, {\tt ~?} & {\tt 01}, {\tt 10}, {\tt ~?} & {\tt 01}, {\tt 01}, {\tt ~?} \\
        {\tt 10}, {\tt 10}, {\tt 10} & {\tt 10}, {\tt 01}, {\tt 10} & {\tt 01}, {\tt 10}, {\tt 10} & {\tt 01}, {\tt 01}, {\tt 01} \\
        \bottomrule
      \end{tabular*}
    \end{scriptsize}
  \end{minipage}
  \vspace*{\baselineskip}
\end{table}

\begin{sidewaystable}
  \centering
  \caption{\label{formaldpl:proof-and_macro_steps} Execution of the \ac{DPL} macro expanded from {\tt and d a b}.}
  \begin{tabular}{lcccclcccclcccclccccl}
    \toprule
    $a, b$ & \multicolumn{4}{c}{{\tt 10}, {\tt 10}} && \multicolumn{4}{c}{{\tt 10}, {\tt 01}} && \multicolumn{4}{c}{{\tt 01}, {\tt 10}} && \multicolumn{4}{c}{{\tt 01}, {\tt 01}} & \\
    & $d$ & {\tt r1} & {\tt r2} & {\tt r3} && $d$ & {\tt r1} & {\tt r2} & {\tt r3} && $d$ & {\tt r1} & {\tt r2} & {\tt r3} && $d$ & {\tt r1} & {\tt r2} & {\tt r3} & \\
    \cmidrule(r){2-6}
    \cmidrule(r){7-11}
    \cmidrule(r){12-16}
    \cmidrule(r){17-21}
    {\tt mov r1 r0}                     & {\tt ~?} & {\tt ~~~0} & {\tt ~~?} & {\tt ~~?} && {\tt ~?} & {\tt ~~~0} & {\tt ~~?} & {\tt ~~?} && {\tt ~?} & {\tt ~~~0} & {\tt ~~?} & {\tt ~~?} && {\tt ~?} & {\tt ~~~0} & {\tt ~~?} & {\tt ~~?} & \\
    {\tt mov r1 $a$}                    & {\tt ~?} & {\tt ~~10} & {\tt ~~?} & {\tt ~~?} && {\tt ~?} & {\tt ~~10} & {\tt ~~?} & {\tt ~~?} && {\tt ~?} & {\tt ~~01} & {\tt ~~?} & {\tt ~~?} && {\tt ~?} & {\tt ~~01} & {\tt ~~?} & {\tt ~~?} & \\
    {\tt and r1 r1 \#3}                 & {\tt ~?} & {\tt ~~10} & {\tt ~~?} & {\tt ~~?} && {\tt ~?} & {\tt ~~10} & {\tt ~~?} & {\tt ~~?} && {\tt ~?} & {\tt ~~01} & {\tt ~~?} & {\tt ~~?} && {\tt ~?} & {\tt ~~01} & {\tt ~~?} & {\tt ~~?} & \\
    {\tt shl r1 r1 \#1}                 & {\tt ~?} & {\tt ~100} & {\tt ~~?} & {\tt ~~?} && {\tt ~?} & {\tt ~100} & {\tt ~~?} & {\tt ~~?} && {\tt ~?} & {\tt ~010} & {\tt ~~?} & {\tt ~~?} && {\tt ~?} & {\tt ~010} & {\tt ~~?} & {\tt ~~?} & \\
    {\tt shl r1 r1 \#1}                 & {\tt ~?} & {\tt 1000} & {\tt ~~?} & {\tt ~~?} && {\tt ~?} & {\tt 1000} & {\tt ~~?} & {\tt ~~?} && {\tt ~?} & {\tt 0100} & {\tt ~~?} & {\tt ~~?} && {\tt ~?} & {\tt 0100} & {\tt ~~?} & {\tt ~~?} & \\
    {\tt mov r2 r0}                     & {\tt ~?} & {\tt 1000} & {\tt ~~0} & {\tt ~~?} && {\tt ~?} & {\tt 1000} & {\tt ~~0} & {\tt ~~?} && {\tt ~?} & {\tt 0100} & {\tt ~~0} & {\tt ~~?} && {\tt ~?} & {\tt 0100} & {\tt ~~0} & {\tt ~~?} & \\
    {\tt mov r2 $b$}                    & {\tt ~?} & {\tt 1000} & {\tt ~10} & {\tt ~~?} && {\tt ~?} & {\tt 1000} & {\tt ~01} & {\tt ~~?} && {\tt ~?} & {\tt 0100} & {\tt ~10} & {\tt ~~?} && {\tt ~?} & {\tt 0100} & {\tt ~01} & {\tt ~~?} & \\
    {\tt and r2 r2 \#3}                 & {\tt ~?} & {\tt 1000} & {\tt ~10} & {\tt ~~?} && {\tt ~?} & {\tt 1000} & {\tt ~01} & {\tt ~~?} && {\tt ~?} & {\tt 0100} & {\tt ~10} & {\tt ~~?} && {\tt ~?} & {\tt 0100} & {\tt ~01} & {\tt ~~?} & \\
    {\tt orr r1 r1 r2}                  & {\tt ~?} & {\tt 1010} & {\tt ~10} & {\tt ~~?} && {\tt ~?} & {\tt 1001} & {\tt ~01} & {\tt ~~?} && {\tt ~?} & {\tt 0110} & {\tt ~10} & {\tt ~~?} && {\tt ~?} & {\tt 0101} & {\tt ~01} & {\tt ~~?} & \\
    {\tt mov r3 r0}                     & {\tt ~?} & {\tt 1010} & {\tt ~10} & {\tt ~~0} && {\tt ~?} & {\tt 1001} & {\tt ~01} & {\tt ~~0} && {\tt ~?} & {\tt 0110} & {\tt ~10} & {\tt ~~0} && {\tt ~?} & {\tt 0101} & {\tt ~01} & {\tt ~~0} & \\
    {\tt mov r3 !r1,$and$}\footnotemark & {\tt ~?} & {\tt 1010} & {\tt ~10} & {\tt ~10} && {\tt ~?} & {\tt 1001} & {\tt ~01} & {\tt ~10} && {\tt ~?} & {\tt 0110} & {\tt ~10} & {\tt ~10} && {\tt ~?} & {\tt 0101} & {\tt ~01} & {\tt ~01} & \\
    {\tt mov $d$ r0}                    & {\tt ~0} & {\tt 1010} & {\tt ~10} & {\tt ~10} && {\tt ~0} & {\tt 1001} & {\tt ~01} & {\tt ~10} && {\tt ~0} & {\tt 0110} & {\tt ~10} & {\tt ~10} && {\tt ~0} & {\tt 0101} & {\tt ~01} & {\tt ~01} & \\
    {\tt mov $d$ r3}                    & {\tt 10} & {\tt 1010} & {\tt ~10} & {\tt ~10} && {\tt 10} & {\tt 1001} & {\tt ~01} & {\tt ~10} && {\tt 10} & {\tt 0110} & {\tt ~10} & {\tt ~10} && {\tt 01} & {\tt 0101} & {\tt ~01} & {\tt ~01} & \\
    \bottomrule
  \end{tabular}
\end{sidewaystable}

This proves that the \ac{DPL} transformation of the {\tt and} instructions are correct.
The demonstration is similar for {\tt or} and {\tt xor} operations.
\end{proof}

The automatic \ac{DPL} transformation of arbitrary assembly code has been implemented in our tool described in App.~\ref{sensi:paioli}.

\section{Formally Proving the Absence of Leakage}
\label{formaldpl:paioli-proof}

Now that we know the \ac{DPL} transformation is correct, we need to prove its efficiency security-wise.
We prove the absence of leakage on the software, while obviously the leakage heavily depends on the hardware.
Our proof thus makes an hypothesis on the hardware: we suppose that the bits we use for the positive and negative logic in the \ac{DPL} protocol leak the same amount.
This may seem like an unreasonable hypothesis, since it is not true in general.
However,
the protection can be implemented in a soft \acs{CPU} core (LatticeMicro32, OpenRISC, LEON2, etc.),
that would be laid out in a \ac{FPGA} or in an \ac{ASIC} with special balancing constraints at place-and-route.
The methodology follows the guidelines given by Chen \etal in~\cite{DBLP:journals/tc/ChenSS13}.
Moreover, we will show in Sec.~\ref{formaldpl:paioli-hw} how it is possible, using stochastic profiling, to find bits which leakages are similar enough for the \ac{DPL} countermeasure to be sufficiently efficient even on non-specialized hardware.
That said, it is important to note that the difference in leakage between two bits of the same register should not be large enough for the attacker to break the \ac{DPL} protection using \ac{SPA} or \ac{ASCA}.

Formally proving the balance of \ac{DPL} code requires to properly define the notions we are using.

\footnotetext{See Tab.~\ref{formaldpl:tab-dpl_lut}.} 

\begin{definition}[Leakage model]
\label{formaldpl:def-leakage_model}
The attacker is able to measure the power consumption of parts of the cryptosystem.
We model power consumption by the Hamming distance of values updates, \ie, the number of bit flips.
It is a commonly accepted model for power analysis, for instance with \ac{DPA}~\cite{kocher-dpa_and_related_attacks} or \ac{CPA}~\cite{cpa-ches04}.
We write $H(a, b)$ the Hamming distance between the values $a$ and $b$.
\end{definition}

\begin{definition}[Constant activity]
\label{formaldpl:def-constant_activity}
The activity of a cryptosystem is said to be constant if its power consumption does not depend on the sensitive data and is thus always the same.\\
Formally, let $P(s)$ be a program which has $s$ as parameter (\eg, the key and the plaintext).
According to our leakage model, a program $P(s)$ is of \emph{constant activity} if:
\begin{compactitem}
\item for every values $s_1$ and $s_2$ of the parameter $s$, for each cycle $i$, for every sensitive value $v$, $v$ is updated at cycle $i$ in the run of $P(s_1)$ if and only if it is updated also at cycle $i$ in the run of $P(s_2)$;
\item whenever an instruction modifies a sensitive value from $v$ to $v'$, then the value of $H(v,v')$ does not depend on $s$.
\end{compactitem}
\end{definition}

\begin{remark}
The first condition of Def.~\ref{formaldpl:def-constant_activity} mostly concerns leakage in the horizontal\,/\,time dimension,
while the second condition mostly concerns leakage in the vertical\,/\,amplitude dimension.
\end{remark}

\begin{remark}
The first condition of Def.~\ref{formaldpl:def-constant_activity} implies that the runs of the program $P(s)$ are constant in time for every $s$.
This implies that a program of constant activity is not vulnerable to timing attacks, which is not so surprising given the similarity between \ac{SPA} and timing attacks.
\end{remark}

\subsection{Computed Proof of Constant Activity}
\label{formaldpl:paioli-cpu}

To statically determine if the code is correctly balanced
(\ie, that the activity of a given program is constant according to Def.~\ref{formaldpl:def-constant_activity}),
our tool relies on symbolic execution.
The idea is to run the code of the program independently of the sensitive data.
This is achieved by computing on sets of all the possible values instead of values directly.
The symbolic execution terminates in our case because we are using the \ac{DPL} protection on block ciphers,
and we avoid combinatorial explosion thanks to bitslicing, as a value can initially be only $1$ or $0$ (or rather their \ac{DPL} encoded counterparts).
Indeed, bitsliced code only use logical instructions as explained in Sec.~\ref{formaldpl:paioli-transform}, which will always return a result in $\{0,1\}$ when given two values in $\{0,1\}$ as arguments.

Our tool implements an interpreter for our generic assembly language which work with sets of values.
The interpreter is equipped to measure all the possible Hamming distances of each value update, and all the possible Hamming weight of values.
It watches updates in registers, in memory, and also in address buses (since the addresses may leak information when reading in look-up tables).
If for one of these value updates there are different possible Hamming distances or Hamming weight, then we consider that there is a leak of information: the power consumption activity is not constant according to Def.~\ref{formaldpl:def-constant_activity}.

\paragraph{Example.}
Let $a$ be a register which can initially be either $0$ or $1$.
Let $b$ be a register which can initially be only $1$.
The execution of the instruction {\tt orr $a$ $a$ $b$} will set the value of $a$ to be all the possible results of $a \lor b$.
In this example, the new set of possible values of $a$ will be the singleton $\{1\}$ (since $0 \lor 1$ is $1$ and $1 \lor 1$ is $1$ too).
The execution of this instruction only modified one value, that of $a$.
However, the Hamming distance between the previous value of $a$ and its new value can be either $0$ (in case $a$ was originally $1$) or $1$ (in case $a$ was originally $0$).
Thus, we consider that there is a leak.
\smallskip

By running our interpreter on assembly code, we can statically determine if there are leakages or if the code is perfectly balanced.
For instance for a block cipher, we initially set the key and the plaintext (\ie, the sensitive data) to have all their possible values:
all the memory cells containing the bits of the key and of the plaintext have the value $\{0, 1\}$ (which denotes the set of two elements: $0$ and $1$).
Then the interpreter runs the code and outputs all possible leakage; if none are present, it means that the code is well balanced.
Otherwise we know which instructions caused the leak, which is helpful for debugging, and also to locate sensitive portions of the code.

For an example in which the code is balanced, we can refer to the execution of the {\tt and} \ac{DPL} macro shown in Tab.~\ref{formaldpl:proof-and_macro_steps}.
There we can see that the Hamming distance of the updates does not depend on the values of $a$ and $b$.
We also note that at the end of the execution (and actually, all along the execution) the Hamming weight of each value does not depend on $a$ and $b$ either.
This allows to chain macros safely: each value is precharged with $0$ before being written to.

\subsection{Hardware Characterization}
\label{formaldpl:paioli-hw}

The \ac{DPL} countermeasure relies on the fact that the pair of bits used to store the \ac{DPL} encoded values leak the same way, \ie, that their power consumptions are the same.
This property is generally not true in non-specialized hardware.
However, using the two closest bits (in terms of leakage) for the \ac{DPL} protocol still helps reaching a better immunity to side-channel attacks, especially \acp{ASCA} that operate on a limited number of traces.

The idea is to compute the leakage level of each of the bits during the execution of the algorithm,
in order to choose the two closest ones as the pair to use for the \ac{DPL} protocol and thus ensure an optimal balance of the leakage.
This is facilitated by the fact that the algorithm is bitsliced.
Indeed, it allows to run the whole computation using only a chosen bit while all the others stay zero.
We will see in Sec.~\ref{formaldpl:avr-profiling} how we characterized our smartcard in practice.

\section{Case Study: \present on an ATmega163 AVR Micro-Controller}
\label{formaldpl:avr}

\subsection{Profiling the ATmega163}
\label{formaldpl:avr-profiling}

We want to limit the size of the look-up tables used by the \ac{DPL} macros.
Thus, \ac{DPL} macros need to be able to store two \ac{DPL} encoded bits in the four consecutive bits of a register.
This lets $13$ possible \ac{DPL} encoding layouts on $8$-bit.
Writing {\tt X} for a bit that is used and {\tt x} otherwise, we have:
\begin{compactenum}
\item {\tt xxxxxxXX},
\item {\tt xxxxxXXx},
\item {\tt xxxxXXxx},
\item {\tt xxxXXxxx},
\item {\tt xxXXxxxx},
\item {\tt xXXxxxxx},
\item {\tt XXxxxxxx},
\item {\tt xxxxxXxX},
\item {\tt xxxxXxXx},
\item {\tt xxxXxXxx},
\item {\tt xxXxXxxx},
\item {\tt xXxXxxxx},
\item {\tt XxXxxxxx}.
\end{compactenum}
As explained in Sec.~\ref{formaldpl:paioli-hw}, we want to use the pair of bits that have the closest leakage properties, and also which is the closest from the least significant bit, in order to limit the size of the look-up tables.

To profile the AVR chip (we are working with an \emph{Atmel ATmega163 AVR} smartcard, which is \emph{notoriously leaky}),
we ran eight versions of an unprotected bitsliced implementation of \present, each of them using only one of the $8$ possible bits.
We used the \ac{NICV}~\cite{emc-tokyo}, also called \emph{coefficient of determination},
as a metric to evaluate the leakage level of the variables of each of the $8$ versions.
Let us denote by $L$ the (noisy and non-injective) leakage associated with the manipulation of the sensitive value $V$,
both seen as random variables;
then the \ac{NICV} is defined as the ratio between the inter-class and the total variance of the leakage,
that is:
$
\text{\ac{NICV}} = \frac{\mathsf{Var}\lbrack\mathbb{E}\lbrack L|V \rbrack\rbrack}{\mathsf{Var}\lbrack L \rbrack} \enspace.
$
By the Cauchy-Schwarz theorem, we have $0\leqslant \text{\ac{NICV}}\leqslant 1$; thus the \ac{NICV} is an absolute leakage metric.
A key advantage of \ac{NICV} is that it detects leakage using public information like input plaintexts or output ciphertexts only.
We used a fixed key and a variable plaintext on which applying \ac{NICV} gave us the leakage level of all the intermediate variables in bijective relation with the plaintext
(which are all the sensible data as seen in Def.~\ref{formaldpl:def-senible_val}).
As we can see on the measures plotted in Fig.~\ref{formaldpl:fig-profiling_bits} (which can be found in App.~\ref{formaldpl:app-profiling}),
the least significant bit leaks very differently from the others,
which are roughly equivalent in terms of leakage\footnote{These differences are due to the internal architecture of the chip, for which we don't have the specifications.}.
Thus, we chose to use the {\tt xxxxxXXx} \ac{DPL} pattern to avoid the least significant bit (our goal here is not to use the optimal pair of bits but rather to demonstrate the added-value of the characterization).

\subsection{Generating Balanced AVR Assembly}
\label{formaldpl:avr-mapping}

We wrote an AVR bitsliced implementation of \present that uses the S-Box in $14$ logic gates from Courtois \etal~\cite{DBLP:journals/iacr/CourtoisHM11}.
This implementation was translated in our generic assembly language (see Sec.~\ref{formaldpl:paioli-asm}).
The resulting code was balanced following the method discussed in Sec.~\ref{formaldpl:paioli-gen}, except that we used the \ac{DPL} encoding layout adapted to our particular smartcard, as explained in Sec.~\ref{formaldpl:avr-profiling}.
App.~\ref{formaldpl:app-macro} presents the code of the adapted \ac{DPL} macro.
The balance of the \ac{DPL} code was then verified as in Sec.~\ref{formaldpl:paioli-proof}.
Finally, the verified code was mapped back to AVR assembly.
All the code transformations and the verification were done automatically using our tool.

\subsection{Cost of the Countermeasure}
\label{formaldpl:avr-cost}


\begin{wraptable}{R}{0.45\textwidth}
  \centering
  \caption{\label{formaldpl:tab-cost} \ac{DPL} cost.}
  \begin{footnotesize}
  \begin{tabular*}{\linewidth}{@{\extracolsep{\fill}}rrrr}
    \toprule
    & bitslice & \ac{DPL} & cost \\
    \midrule
    code (B) & $1620$   & $3056$    & $\times 1.88$ \\
    RAM (B)  & $288$    & $352$     & $+ 64$        \\
    \#cycles & $78,403$ & $235,427$ & $\times 3$    \\
    \bottomrule
  \end{tabular*}
  \end{footnotesize}
\end{wraptable}

The table in Tab.~\ref{formaldpl:tab-cost} compares the performances of the \ac{DPL} protected implementation of \present with the original bitsliced version from which the protected one has been derived.
The \ac{DPL} countermeasure multiplies by $1.88$ the size of the compiled code.
This low factor can be explained by the numerous instructions which it is not necessary to transform (the whole permutation layer of the \present algorithm is left as is for instance).
The protected version uses $64$ more bytes of memory (sparsely, for the \ac{DPL} macro look-up tables).
It is also only $3$ times slower%
\footnote{Notice that \present is inherently slow in software
(optimized non-bitsliced assembly is reported to run in about $11,000$ clock cycles on an \emph{Atmel ATtiny 45} device~\cite{DBLP:conf/africacrypt/EisenbarthGGHIKKNPRSO12})
because it is designed for hardware.
Typically, the permutation layer is free in hardware, but requires many bit-level manipulations in software.
Nonetheless, we precise that there are contexts where \present must be supported,
but no hardware accelerator is available.},
or $24$ times if we consider that the original bitsliced but unprotected code could operate on $8$ blocks at a time.

Note that these experimental results are only valid for the \present algorithm on the \emph{Atmel ATmega163 AVR} device we used.
Further work is necessary to compare these results to those which would be obtained with other algorithms such as \ac{AES}, and on other platforms such as ARM processors.

\subsection{Attacks}
\label{formaldpl:avr-attacks}

We attacked three implementations of the \present algorithm: a bitsliced but unprotected one, a \ac{DPL} one using the two less significant bits, and a \ac{DPL} one using two bits that are more balanced in term of leakage (as explained in Sec.~\ref{formaldpl:avr-profiling}).
On each of these, we computed the success rate of using monobit \ac{CPA} of the output of the S-Box as a model.
The monobit model is relevant because only one bit of sensitive data is manipulated at each cycle since the algorithm is bitsliced,
and also because each register is precharged at $0$ before a new intermediate value is written to it, as per the \ac{DPL} protocol prescribe.
Note that this means we consider the resistance against first-order attacks only.
Actually, we are precisely in the context of~\cite{ifs:mangard}, where the efficiency of correlation and Bayesian attacks gets close as soon as the number of queries required to perform a successful attack is large enough.
This justifies our choice of the \ac{CPA} for the attack evaluation.

The results are shown in Fig.~\ref{formaldpl:fig_attacks} (which can be found in App.~\ref{formaldpl:app-attacks}).
They demonstrate that the first \ac{DPL} implementation is at least $10$ times more resistant to first-order power analysis attacks (requiring almost $1,500$ traces) than the unprotected one.
The second \ac{DPL} implementation, which takes the chip characterization into account, is \emph{$34$ times more resistant} (requiring more than $4,800$ traces).

Interpreting these results requires to bear in mind that the \emph{attacks setting was largely to the advantage of the attacker}.
In fact, these results are very pessimistic: we used our knowledge of the key to select a narrow part of the traces where we knew that the attack would work, and we used the \ac{NICV}~\cite{emc-tokyo} to select the point where the \ac{SNR} of the \ac{CPA} attack is the highest
(see similar use cases of \ac{NICV} in~\cite{Bhasin:2014:SLT:2611765.2611772}).
We did this so we could show the improvement in security due to the characterization of the hardware.
Indeed, without this ``cheating attacker'' (for the lack of a better term), \ie, when we use a monobit \ac{CPA} taking into account the maximum of correlation over the full round, as a normal attacker would do,
the unprotected implementation breaks using about $400$ traces (resp. $138$ for the ``cheating attacker''), while the poorly balanced one is still not broken using $100,000$ traces (resp. about $1,500$).
We do not have more traces than that so we can only say that with an experimental \ac{SNR} of $15$ (which is quite large so far), the security gain is more than $250\times$ and may be much higher with the hardware characterization taken into account as our results with the ``cheating attacker'' shows.

As a comparison%
\footnote{We insist that the comparison between two security gains is very platform-dependent.
The figures we give are only valid on our specific setup. Of course, for different conditions, \eg, lower signal-to-noise ratio, masking might become more secure than \ac{DPL}.},
an unprotected \ac{AES} on the same smartcard breaks in $15$ traces,
and in $336$ traces with a first order masking scheme using less powerful attack setting (see success rates of masking in App.~\ref{formaldpl:sub-masked_aes}),
hence a security gain of $22\times$.
Besides, we notice that our software \ac{DPL} protection thwarts \acp{ASCA}.
Indeed, \acp{ASCA} require a high signal to noise ratio on a single trace.
This can happen both on unprotected and on masked implementation.
However, our protection aims at theoretically cancelling the leakage,
and practically manages to reduce it significantly, even when the chosen \ac{DPL} bit pair is not optimal.
Therefore, coupling software \ac{DPL} with key-update~\cite{MM:AFRICACRYPT10}
allows to both prevent against fast attacks on few traces (\acp{ASCA}) and against attacks that would require more traces (regular \acp{CPA}).

\section{Conclusions and Perspectives}
\label{formaldpl:concl}

\paragraph{Contributions.}

We present a method to protect any bitsliced assembly code by transforming it to enforce the \acf{DPL} protocol, which is a balancing countermeasure against power analysis.
We provide a tool which automates this transformation.
We also formally prove that this transformation is correct, \ie, that it preserves the semantic of the program.

Independently, we show how to formally prove that assembly code is well balanced.
Our tool is also able to use this technique to statically determine whether some arbitrary assembly code's power consumption activity is constant, \ie, that it does not depend on the sensitive data.
In this chapter we used the Hamming weight of values and the Hamming distance of values update as leakage models for power consumption, but our method is not tied to it and could work with any other leakage models that are computable.
We present how to characterize the targeted hardware to make use of the resources which maximize the relevancy of our leakage model to run the \ac{DPL} protocol.

We then applied our methods using our tool using an implementation of the \present cipher on a real smartcard, which ensured that our methods and models are relevant in practice.
In our case study, the provably balanced \ac{DPL} protected implementation is at least $250$ times more resistant to power analysis attacks than the unprotected version while being only $3$ times slower.
These figures could be better.
Indeed, they do not take into account hardware characterization which helps the balancing a lot, as we were able to see with the ``cheating attacker''.
Moreover, we have used the hardware characterization data grossly, only to show the added-value of the operation, which as expected is non-negligible.
And of course interpreting our figures require to take into account that the \emph{ATmega163}, the model of smartcard that we had at our disposal, is notoriously leaky.

These results show that software balancing countermeasures are realistic:
our formally proved countermeasure is an order of magnitude less costly than the state of the art of formally proved masking~\cite{DBLP:conf/ches/RivainP10}.

\paragraph{Future work.}
The first and foremost future work surely is that our methods and tools need to be further tested in other experimental settings, across more hardware platforms, and using other cryptographic algorithms.

We did not try to optimize our \present implementation (neither for speed nor space).
However, automated proofs enable optimization:
indeed, the security properties can be checked again after any optimization attempt (using proofs computation as non-regression tests, either for changes in the \ac{DPL} transformation method, or for handcrafted optimizations of the generated \ac{DPL} code).

Although the mapping from the internal assembly of our tool to the concrete assembly is straightforward, it would be better to have a formal correctness proof of the mapping.

Our work would also benefit from automated bitslicing, which would allow to automatically protect any assembly code with the \ac{DPL} countermeasure.
However, it is still a challenging issue.

Finally, the \ac{DPL} countermeasure itself could be improved:
the pair of bits used for the \ac{DPL} protocol could change during the execution,
or more simply it could be chosen at random for each execution in order to better balance the leakage among multiple traces.
Besides, unused bits could be randomized instead of being zero in order to add noise on top of balancing, and thus reinforce the hypotheses we make on the hardware.
An anonymous reviewer of the PROOFS 2014 workshop suggested that randomness could instead be used to mask the intermediate bits.
Indeed, the reviewer thinks that switching bus lines may only increase noise, while masking variables may provide sound resistance, at least at first order.
The resulting method would therefore:
\begin{inparaenum}
\item gain both the 1st-order resistance of masking countermeasures and the significant flexibility of software-defined countermeasures;
\item still benefit from the increase of resistance resorting to the use of the \ac{DPL} technique, as demonstrated by present chapter.
\end{inparaenum}
This suggestion is of course only intuitive and lacks argumentation based on precise analysis and calculation.
\\[1mm]
\indent\emph{We believe formal methods have a bright future concerning the certification of side-channel attacks countermeasures (including their implementation in assembly) for trustable cryptosystems.}

\bibliographystyle{alpha}
\bibliography{./dpl}

\clearpage
\appendix

\section{\paioli}
\label{sensi:paioli}

The goal of \paioli\footnote{\url{http://pablo.rauzy.name/sensi/paioli.html}} (\emph{Power Analysis Immunity by Offsetting Leakage Intensity}) is to protect assembly code against power analysis attacks such as DPA (differential power analysis) and CPA (correlation power analysis), and to formally prove the efficiency of the protection.
To this end, it implements the automatic insertion of a balancing countermeasure, namely DPL (dual-rail with precharge logic), in assembly code (for now limited to bitsliced block-cipher type of algorithms).
Independently, it is able to statically verify if the power consumption of a given assembly code is correctly balanced with regard to a leakage model (\eg, the Hamming weight of values, or the Hamming distance of values updates).

\begin{footnotesize}
\begin{alltt}
paioli [options] <input-file>
    -bf Bit to use as F is DPL macros (default: 1)
    -bt Bit to use as T is DPL macros (default: 0)
    -po Less significant bit of the DPL pattern for DPL LUT access
        (default: 0)
    -cl Compact the DPL look-up table (LUT) if present
    -la Address in memory where to put the DPL LUT (default: 0)
    -r1 Register number of one of the three used by DPL macros
        (default: 20)
    -r2 Register number of one of the three used by DPL macros
        (default: 21)
    -r3 Register number of one of the three used by DPL macros
        (default: 22)
    -a  Adapter for custom assembly language
    -o  asm output (default: no output)
    -l  Only check syntax if present
    -d  Perform DPL transformation of the code if present
    -v  Perform leakage verification if present
    -s  Perform simulation if present
    -r  Register count for simulation (default: 32)
    -m  Memory size for simulation (default: 1024)
    -M  range of memory to display after simulation
    -R  range of registers to display after simulation
\end{alltt}
\end{footnotesize}

The rest of this section details its features.

\paragraph{Adapters.}
To easily adapt it to any assembly language, it has a system of plugins (which we call ``adapters'') that allows to easily write a parser and a pretty-printer for any language and to use them instead of the internal parser and pretty-printer (which are made for the internal language we use, see Sec.~\ref{formaldpl:paioli-asm}) without having to recompile the whole tool.

\paragraph{DPL transformation.}
If asked so, \paioli is able to automatically apply the DPL transformation as explained in Sec.~\ref{formaldpl:paioli-transform}.
It takes as arguments which bits to use for the DPL protocol,
the offset at which to place the pattern for look-up tables
(for example, we used an offset of $1$ to avoid resorting to the least significant bit which leaks differently),
and where in memory should the look-up tables start.
Given these parameters, the tool verifies that they are valid and consistent according to the DPL protocol, and then it generates the DPL balanced code corresponding to the input code, including the code for look-up tables initialization.
Optionally, the tool is able to compact the look-up tables (since they are sparse), still making sure that their addresses respect the DPL protocol (Sec.~\ref{formaldpl:dpl-sw}).

\paragraph{Simulation.}
If asked so, \paioli can simulate the execution of the code after its optional DPL transformation.
The simulator is equipped to do the balance verification proof (see Sec.~\ref{formaldpl:paioli-proof}) but it is not mandatory to do the balance analysis when running it.
It takes as parameters the size of the memory and the number of register to use, and initializes them to the set of two DPL encoded values of $1$ and $0$ corresponding to the given DPL parameters.
The tool can optionally display the content of selected portions of the memory or of chosen registers after execution, which is useful for inspection and debugging purpose for example.

\paragraph{Balance verification.}
The formal verification of the balance of the code is an essential functionality of the tool.
Indeed, bugs occur even when having a thorough and comprehensive specification,
thus we believe that it is not sufficient to have a precise and formally proven method for generating protected code, but that the results should be independently verified (see Sec.~\ref{formaldpl:paioli-proof}).

\section{Characterization of the Atmel ATmega163 AVR Micro-Controller}
\label{formaldpl:app-profiling}

Fig.~\ref{formaldpl:fig-profiling_bits} shows the leakage level computed using \ac{NICV}~\cite{emc-tokyo} for each bit of the \emph{Atmel ATmega163 AVR} smartcard that we used for our tests (see Sec.~\ref{formaldpl:avr-profiling}).
We can see the first bit leaks very differently from the others.
Thus it is not a good candidate to appear in the bit pair used for the \ac{DPL} protocol.

\begin{figure}[h]
\includegraphics[height=\textwidth,angle=270]{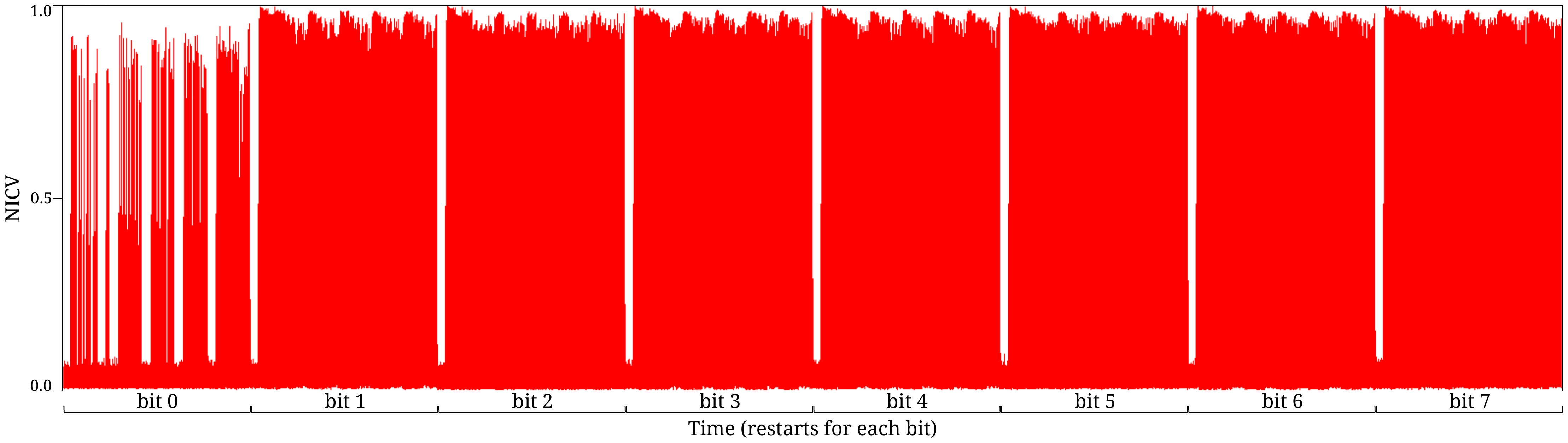}
\caption{Leakage during unprotected encryption for each bit on \emph{ATmega163}.
\label{formaldpl:fig-profiling_bits}}
\end{figure}

\section{\acs{DPL} Macro for the AVR Micro-Controller}
\label{formaldpl:app-macro}

\begin{wrapfigure}{R}{0.28\textwidth}
  \centering
  \begin{tabular}{r c l}
    $r_1$ & $\leftarrow$ & $r_0$ \\
    $r_1$ & $\leftarrow$ & $a$ \\
    $r_1$ & $\leftarrow$ & $r_1 \wedge 6$ \\
    $r_1$ & $\leftarrow$ & $r_1 \ll 1$ \\
    $r_1$ & $\leftarrow$ & $r_1 \ll 1$ \\
    $r_2$ & $\leftarrow$ & $r_0$ \\
    $r_2$ & $\leftarrow$ & $b$ \\
    $r_2$ & $\leftarrow$ & $r_2 \wedge 6$ \\
    $r_1$ & $\leftarrow$ & $r_1 \vee r_2$ \\
    $r_3$ & $\leftarrow$ & $r_0$ \\
    $r_3$ & $\leftarrow$ & $op[r_1]$ \\
    $d$   & $\leftarrow$ & $r_0$ \\
    $d$   & $\leftarrow$ & $r_3$ \\
  \end{tabular}
  \caption{\label{formaldpl:fig-dpl_macro_avr} \protect\centering \ac{DPL} macro for $d = a~\mathrm{op}~b$ on the \emph{ATmega163}.}
  \vspace*{-1.5cm}
\end{wrapfigure}

Once we profiled our smartcard as described in Sec.~\ref{formaldpl:avr-profiling},
we decided to use the bits $1$ and $2$ for the \ac{DPL} protocol ({\tt xxxxxXXx}),
that is, the \ac{DPL} value of $1$ becomes $2$ and the \ac{DPL} value of $0$ becomes $4$.
To avoid using the least significant bit (which leaks very differently from the others),
we decided to align the two \ac{DPL} bits for look-up table access starting on the bit $1$ rather than $0$ ({\tt xxxXXXXx}).
With these settings, the \ac{DPL} macro automatically generated by \paioli is presented in Fig.~\ref{formaldpl:fig-dpl_macro_avr} (it follows the same conventions as Fig.~\ref{formaldpl:fig-dpl_macro}).
As we can see the only modification is the mask applied in the logical {\it and} instructions which is now $6$ instead of $3$ to reflect the new \ac{DPL} pattern.

Note that the least significant bit is now unused by the \ac{DPL} protocol and allowed \paioli to compact the look-up tables used by the \ac{DPL} macros.
Indeed, their addresses need to be of the form {\tt /.+0000./} leaving the least significant bit free and thus allowing to interleave two look-up tables one on another without overlapping of their actually used cells (see Sec.~\ref{formaldpl:paioli-transform}).

\section{Attacks}

\subsection{Attack results on masking (\acs{AES})}
\label{formaldpl:sub-masked_aes}

For the sake of comparison, we provide attack results on the same smartcard tested with the same setup.
Figure~\ref{formaldpl:fig_higher_order} shows the success rate for the attack on the first byte of an \ac{AES}.

We estimate the number of traces for a successful attack as the abscissa where the success rate curve first intersects the $80\%$ horizontal line.

\begin{figure}
  \centering
  \begin{subfigure}[t]{0.48\textwidth}
    \centering
    \caption{{\small Univariate \ac{CPA} attack on unprotected \ac{AES}.}}
    \includegraphics[width=\textwidth,trim=0mm 82.2mm 0mm 84.3mm,clip=true]{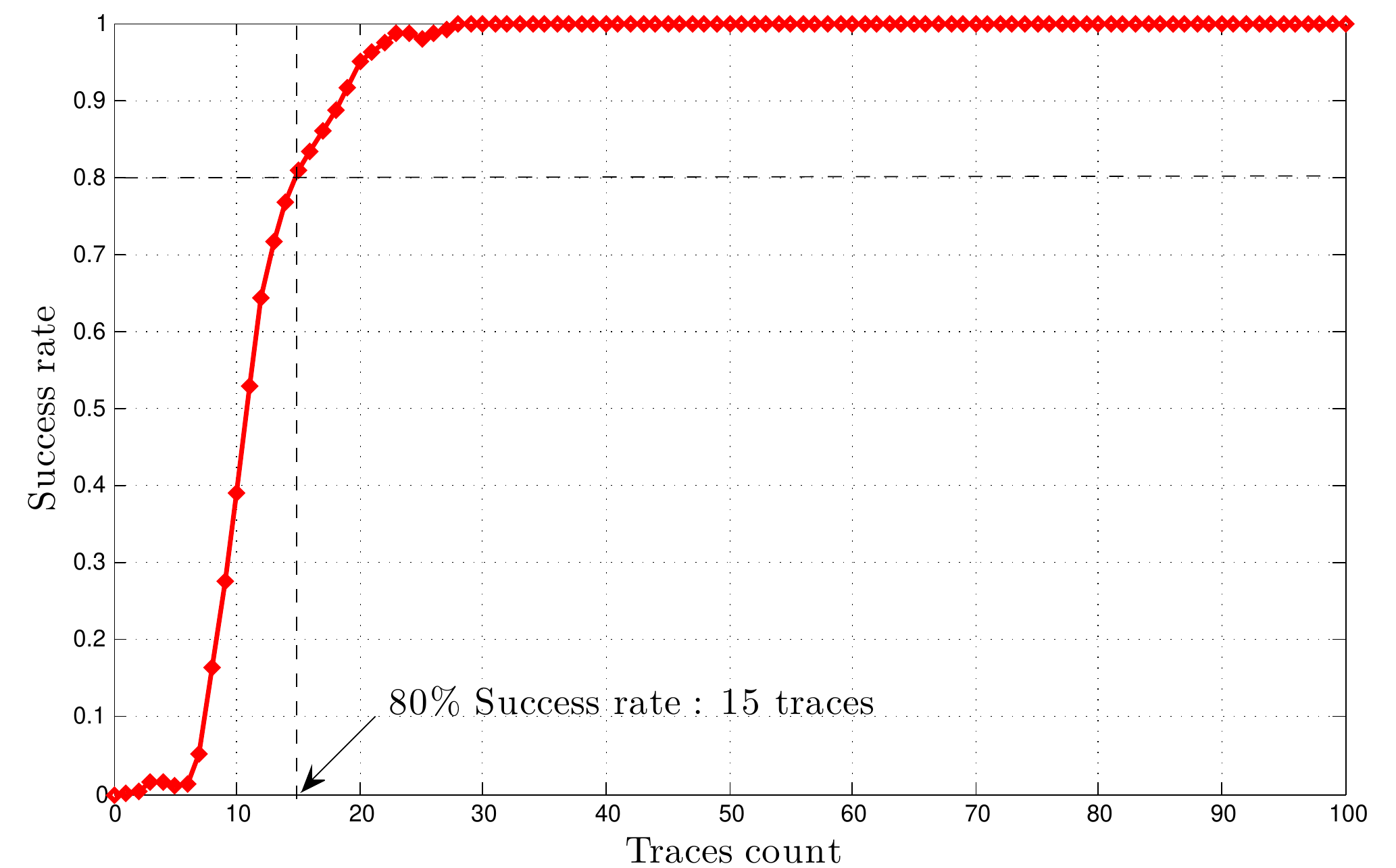}
  \end{subfigure}
  ~
  \begin{subfigure}[t]{0.48\textwidth}
    \centering
    \caption{{\small Bi-variate 2O-\ac{CPA} on 1st-order protected \ac{AES}.}}
    \includegraphics[width=\textwidth,trim=0mm 82mm 0mm 84.8mm,clip=true]{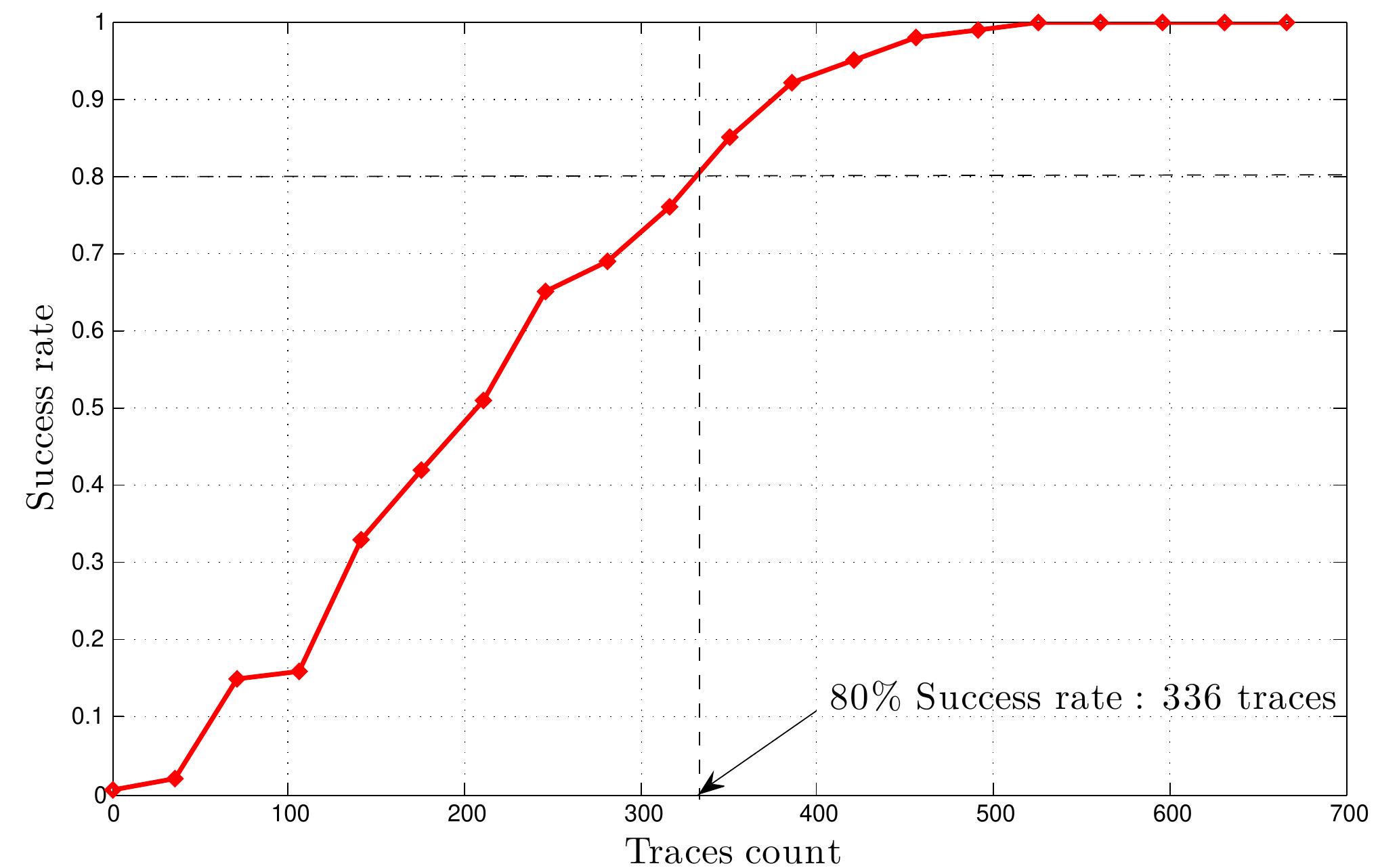}
  \end{subfigure}
  \caption{\label{formaldpl:fig_higher_order} \protect\centering Attacking \ac{AES} on the \emph{ATmega163}: success rates.}
\end{figure}

\subsection{Attack results on \acs{DPL} (\present)}
\label{formaldpl:app-attacks}

Fig.~\ref{formaldpl:fig_attacks} shows the success rates and the correlation curves when attacking our three implementations of \present.
The sensitive variable we consider is in line with the choice of Kocher \etal in their CRYPTO'99 paper~\cite{kocher-dpa_and_related_attacks}:
it is the least significant bit of the output of the substitution boxes (that are $4 \times 4$ in \present).

In Fig.~\ref{formaldpl:fig-timings}, we give, for the unprotected bitslice implementation,
the correspondence between the operations of \present and the \ac{NICV} trace.
The zones of largest \ac{NICV} correspond to operations that access (read or write) sensitive data in RAM.
To make the attacks more powerful, they are not done on the maximal correlation point over the full first round of \present%
\footnote{Note that using the maximum correlation point to attack the \ac{DPL} implementations resulted in the success rate remaining always at $\approx 1/{16}$ (there are $2^4$ key guesses in \present when targeting the first round, because the substitution boxes are $4 \times 4$) in average (at least on the number of traces we had ($100,000$)) on both on them.}
($500,000$ samples),
but rather on a smaller interval
(of only $140$ samples, \ie, one clock period of the device)
of high potential leakage revealed by the \ac{NICV} computations, namely sBoxLayer.

This makes the attack much more powerful and has to be taken into account when interpreting its results.
In fact, the results we present are very pessimistic: we used our knowledge of the key to select a narrow part of the traces where we knew that the attack would work, and we used the \ac{NICV}~\cite{emc-tokyo} to select the point where the \ac{SNR} of the \ac{CPA} attack is the highest.
We did this so we could show the improvement in security due to the characterization of the hardware.
Indeed, without this ``cheating attacker'' (for the lack of a better term), \ie, when we use a monobit \ac{CPA} taking into account the maximum of correlation over the full round, as a normal attacker would do,
the unprotected implementation breaks using about $400$ traces (resp. $138$ for the ``cheating attacker''), while the poorly balanced one is still not broken using $100,000$ traces (resp. about $1,500$).
We do not have more traces than that so we can only say that with an experimental \ac{SNR} of 15 (which is quite large so far), the security gain is more than $250\times$ and may be much higher with the hardware characterization taken into account as our results with the ``cheating attacker'' shows.
Another way of understanding the $250$-fold data complexity increase for the \ac{CPA} is to turn this figure into a reduction of the \ac{SNR}:
according to~\cite{DBLP:conf/ches/ThillardPR13,Bhasin:2014:SLT:2611765.2611772},
our \ac{DPL} countermeasure has attenuated the \ac{SNR} by a factor of at least $\sqrt{250} \approx 16$.

\begin{figure}
  \centering
  \includegraphics[width=\textwidth]{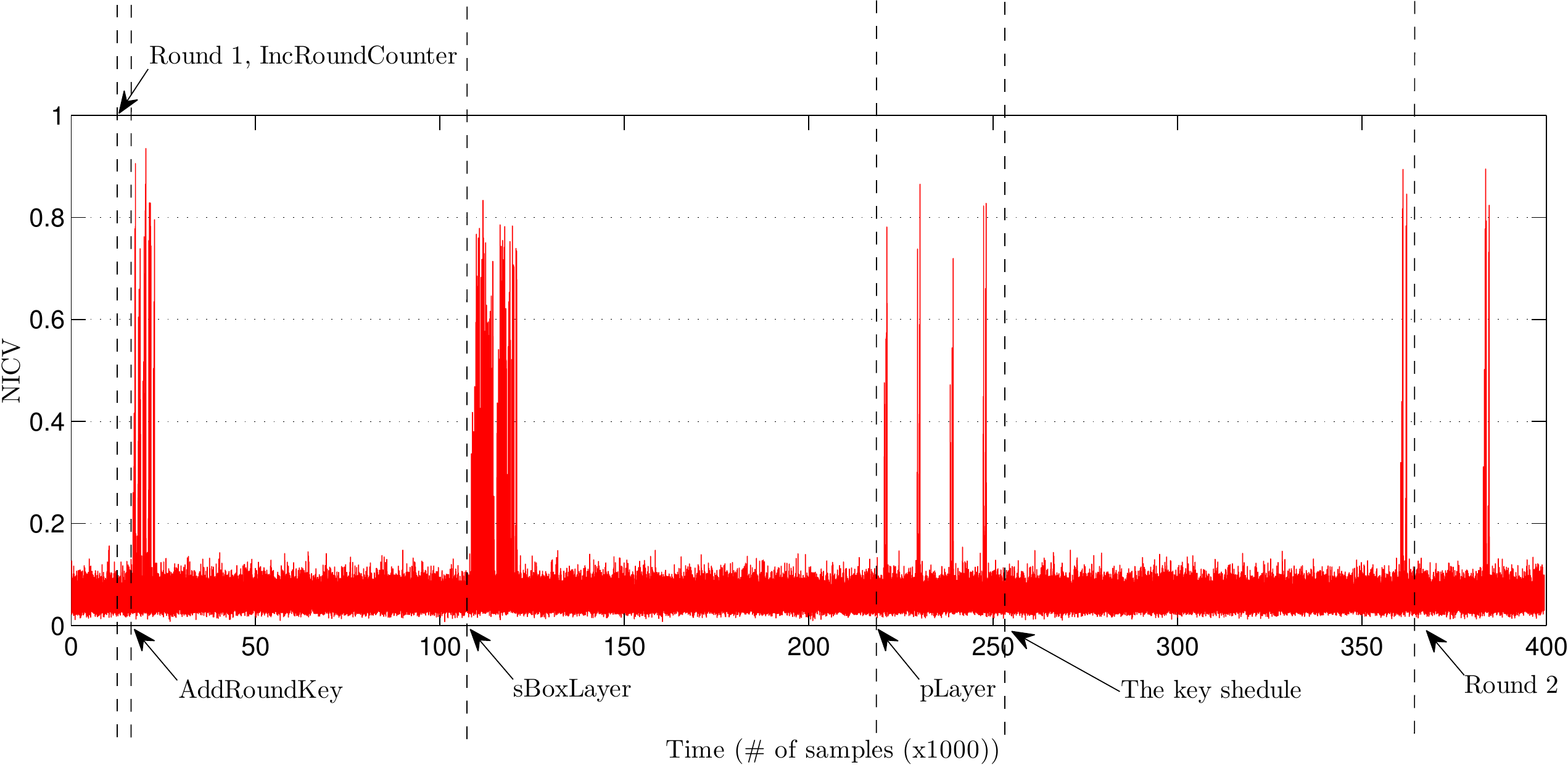}
  \caption{Correspondence between \ac{NICV} and the instructions of \present.}
  \label{formaldpl:fig-timings}
\end{figure}

\begin{figure}
  \centering
  \begin{subfigure}{\textwidth}
    \centering
    \caption{{\small Monobit \ac{CPA} attack on unprotected bitslice implementation.}}
    \includegraphics[height=\textwidth,angle=270]{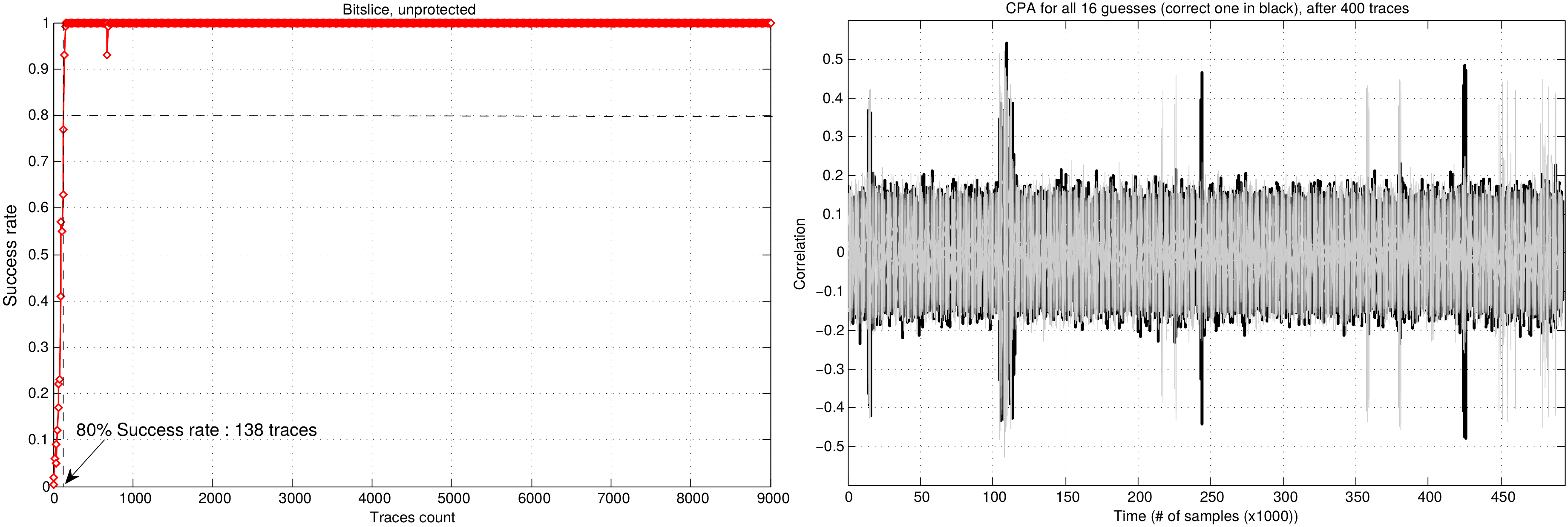}
    \vspace*{\baselineskip}
  \end{subfigure}
  \begin{subfigure}{\textwidth}
    \centering
    \caption{{\small Monobit \ac{CPA} attack on poorly balanced \ac{DPL} implementation (bits $0$ and $1$).}}
    \includegraphics[height=\textwidth,angle=270]{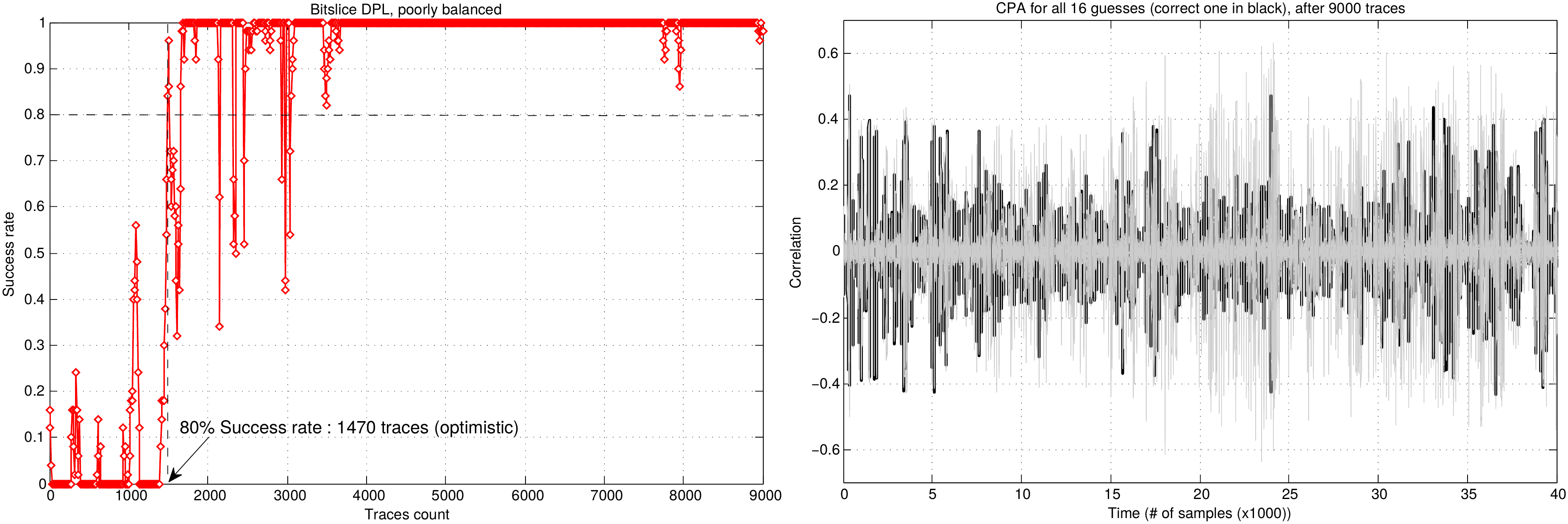}
    \vspace*{\baselineskip}
  \end{subfigure}
  \begin{subfigure}{\textwidth}
    \centering
    \caption{{\small Monobit \ac{CPA} attack on better balanced \ac{DPL} implementation (bits $1$ and $2$).}}
    \includegraphics[height=\textwidth,angle=270]{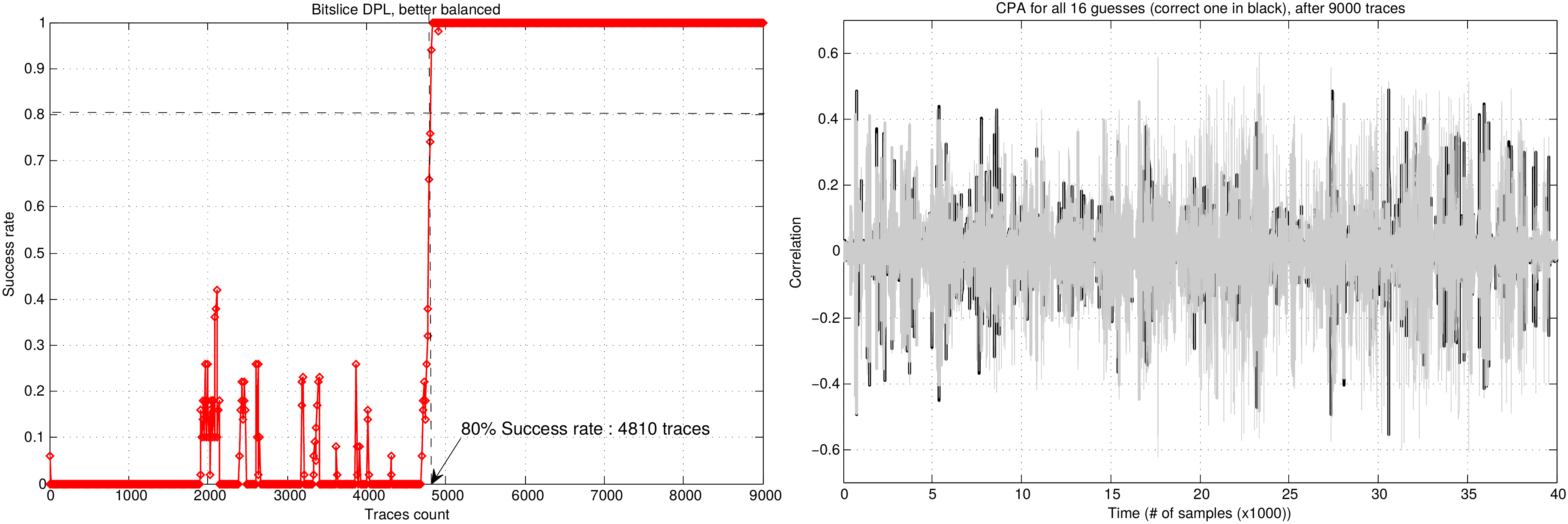}
  \end{subfigure}
  \caption{\label{formaldpl:fig_attacks} \protect\centering Attacks on our three implementations of \present;
    \protect\newline \underline{\emph{Left}}: success rates (estimated with $100$~attacks/step), and
    \protect\newline \underline{\emph{Right}}: \ac{CPA} curves (whole first round in (a), and only sBoxLayer for (b) and (c)).}
\end{figure}

\acrodef{AES}{Advanced Encryption Standard}
\acrodef{ASCA}{Algebraic Side-Channel Attack}
\acrodef{ASIC}{Application-Specific Integrated Circuit}
\acrodef{BCDL}{Balanced Cell-based Differential Logic}
\acrodef{BNF}{Backus--Naur Form}
\acrodef{CPA}{Correlation Power Analysis}
\acrodef{CPU}{Central Processing Unit}
\acrodef{DPA}{Differential Power Analysis}
\acrodef{DPL}{Dual-rail with Precharge Logic}
\acrodef{FPGA}{Field-Programmable Gate Array}
\acrodef{MDPL}{Masked Dual-rail with Precharge Logic}
\acrodef{NICV}{Normalized Inter-Class Variance}
\acrodef{SNR}{Signal-to-Noise Ratio}
\acrodef{SPA}{Simple Power Analysis}
\acrodef{WDDL}{Wave Dynamic Differential Logic}

\end{document}